\pdfoutput=1
\documentclass[a4paper,11pt,twoside]{article}

\usepackage[english]{babel}%

\usepackage[utf8]{inputenc}
\usepackage{cite}
\usepackage{etoolbox}
\usepackage{amsmath, amsthm}
\usepackage{amsfonts, amssymb}
\usepackage{bm}
\usepackage{ifthen}

\usepackage{letltxmacro}

\usepackage{nicefrac}
\usepackage{xstring}

\usepackage{pgf}
\usepackage{tikz}
\usetikzlibrary{arrows,automata}

\usepackage[T1]{fontenc}
\usepackage[tracking]{microtype}

\usepackage[ttscale=0.81]{libertine}
\usepackage[libertine]{newtxmath}

\renewcommand{\mathbf}[1]{\bm{#1}}

\usepackage{pdfsync, graphicx}

\usepackage{framed}

\usepackage{todo}

\usepackage{booktabs,siunitx}

\usepackage{caption}
\usepackage{subcaption}
\usepackage{wrapfig}

\usepackage[all]{nowidow}

\usepackage[bookmarks=false]{hyperref}
\hypersetup{pdfproducer={},
  pdfinfo={CreationDate={...},ModDate={...}},
  pdfauthor={},
  pdfkeywords={},
  pdfsubject={},
  pdfcreator={},
  pdflang={}
}
\hypersetup{
  colorlinks=true,
  linkcolor=green!30!black,
  linktoc=all,
  citecolor=blue
}

\usepackage{cleveref}

\newtheorem{theorem}{Theorem}[section]
\newtheorem{observation}[theorem]{Observation}
\newtheorem{claim}[theorem]{Claim}

\newtheorem{lemma}[theorem]{Lemma}
\newtheorem{corollary}[theorem]{Corollary}
\newtheorem{proposition}[theorem]{Proposition}
\newtheorem{consequence}[theorem]{Consequence}

\theoremstyle{definition}
\newtheorem{definition}{Definition}[section]

\crefname{theorem}{Theorem}{Theorems}
\crefname{observation}{Observation}{Observations}
\crefname{claim}{Claim}{Claims}
\crefname{condition}{Condition}{Conditions}
\crefname{example}{Example}{Examples}
\crefname{fact}{Fact}{Facts}
\crefname{lemma}{Lemma}{Lemmas}
\crefname{corollary}{Corollary}{Corollaries}
\crefname{consequence}{Consequence}{Consequences}
\crefname{definition}{Definition}{Definitions}
\crefname{remark}{Remark}{Remarks}
\crefname{proposition}{Proposition}{Propositions}
%


\newcommand{\abs}[1]{\ensuremath{\left|#1\right|}}


\newcommand{\ceil}[1]{\ensuremath{\left\lceil#1\right\rceil}}
\newcommand{\floor}[1]{\ensuremath{\left\lfloor#1\right\rfloor}}



\newcommand{\inb}[1]{\left\{#1\right\}}
\newcommand{\inp}[1]{\left(#1\right)}
\newcommand{\insq}[1]{\left[#1\right]}

\makeatletter
\newcommand*{\defeq}{\mathrel{\rlap{%
                     \raisebox{0.3ex}{$\m@th\cdot$}}%
                     \raisebox{-0.3ex}{$\m@th\cdot$}}%
                    =}
\newcommand*{\eqdef}{=
  \mathrel{\rlap{%
      \raisebox{0.3ex}{$\m@th\cdot$}}%
    \raisebox{-0.3ex}{$\m@th\cdot$}}%
}
\makeatother




\newcommand{\I}[1]{\ensuremath{\insq{#1}}}


\newcommand{\poly}[1]{\ensuremath{\mathop{\mathrm{poly}}\inp{#1}}}


\renewcommand{\vec}[1]{\mathbf{#1}}



%

\usetikzlibrary{arrows}

\usepackage{fullpage}

\usepackage{setspace}
\usetikzlibrary{er,positioning,bayesnet}
\usepackage{multicol}
\usepackage{verbatim}
\usepackage{algpseudocode,algorithm,algorithmicx}

\usepackage{enumerate}

\definecolor{blue}{HTML}{1F77B4}
\definecolor{orange}{HTML}{FF7F0E}
\definecolor{green}{HTML}{2CA02C}

\newcommand{\nckt}{neural circuit}
\newcommand{\nckts}{neural circuits}

\newcommand{\Nckts}{Neural circuits}
\newcommand{\parent}[1]{\ensuremath{p\inp{#1}}}
\newcommand{\nSimulate}{\textup{\textsc{Neural-Circuit-Simulation}}}

\newcommand{\ntv}[0]{non-trivial}
\newcommand{\Ntv}[0]{Non-trivial}

\newtheorem{problem}{Problem}

\begin{document}

\title{The PSPACE-hardness of understanding neural circuits}

\author{}
 \author{Vidya Sagar Sharma\thanks{Tata Institute of Fundamental Research,
     Mumbai. Email: \texttt{vidya.sagar@tifr.res.in}.} %
   \and Piyush Srivastava\thanks{Tata Institute of Fundamental Research,
     Mumbai. Email: \texttt{piyush.srivastava@tifr.res.in}.}}

 \date{}
\maketitle              %

\thispagestyle{empty}

\begin{abstract}
  In neuroscience, an important aspect of understanding the function of a
  \nckt{} is to determine which, if any, of the neurons in the circuit are
  \emph{vital} for the biological behavior governed by the \nckt{}: i.e., which
  sets of neurons, when deactivated, lead to an elimination of the behavior
  being studied.  Typically, one is interested in finding the smallest such
  sets.  A similar problem is to determine whether a given small set of neurons
  may be enough for the behavior to be displayed, even if all other neurons in
  the circuit are deactivated.  Such a subset of neurons form what is called a
  \emph{degenerate} circuit for the behavior being studied.

  Recent advances in experimental techniques have provided researchers with
  tools to activate and deactivate subsets of neurons with a very high
  resolution, even in living animals.  The data collected from such experiments
  may be of the following form: when a given subset of neurons is deactivated,
  is the behavior under study observed?

  This setting leads to the algorithmic question of determining the minimal
  vital or degenerate sets of neurons, when one is given as input a description
  of the \nckt{}.  The algorithmic problem entails both figuring out which
  subsets of neurons should be perturbed (activated/deactivated), and then using
  the data from those perturbations to determine the minimal vital or degenerate
  sets.  Given the large number of possible perturbations, and the recurrent
  nature of \nckts{}, the possibility of a combinatorial explosion in such an
  approach has been recognized in the biology and the neuroscience literature,
  e.g.~in a paper of Koch~(\emph{Science}, \textbf{337} (6094), pp.~531--532)
  and more recently in a paper of Kumar et al.~(\emph{Trends in Neurosciences}
  \textbf{36} (10), pp.~579--586).  In a recent paper, Ramaswamy
  (bior$\chi{}$iv, 2019) took a step towards formulating the question in terms
  of computational complexity theory and established NP-hardness for some of
  these problems.

  In this paper, we prove that the problems of finding minimal or minimum-size
  degenerate sets, and of finding the set of vital neurons, of a neural circuit
  given as input, are in fact PSPACE-hard.  Further, the hardness results hold
  even when all the neurons in the \nckt{} are threshold neurons with weights
  coming from a fixed, constant size set and have a bounded number of
  connections.  More importantly, we prove our hardness results by showing that
  a simpler problem, that of simulating such \nckts{}, is itself PSPACE-hard.

\end{abstract}

\newpage
\setcounter{page}{1}
\section{Introduction}
\label{sec:introduction}
\subsection{Background}
\label{sec:background}

In neuroscience, an important aspect of understanding the functioning of a
\nckt{} in a biological system is determining which neurons in the \nckt{} are
critical for the functioning of the system.  For example, Flood et
al.~\cite{flood_single_2013} showed that when the activity of a single specific
pair of neurons in \emph{Drosophila} is suppressed, certain feeding behaviors of
the organism are eliminated.  For the same organism, Bohra et
al.~\cite{bohra_identification_2018} identified a similar small set of specific
neurons whose inactivation leads to the elimination of the organism's aversive
response to bitter taste.  The term \emph{vital set} has been proposed for such
sets of neurons~\cite{Ramaswamy2019}.  A related aspect is that of identifying
subsets of neurons such that as long as neurons in such a subset remain active,
the inactivation of any other neurons outside the subset does not eliminate the
behavior.  The term \emph{degeneracy} has been proposed to describe such
phenomena (both in the setting of neuroscience, as well as in the setting of
other biological systems)~\cite{edelman_degeneracy_2001}.  In agreement with the
terminology, the term \emph{degenerate circuit} was proposed
in~\cite{Ramaswamy2019} to describe such subsets of neurons. Typically, with
respect to a given behavior, one would be interested in determining the smallest
or minimal vital or degenerate subsets of neurons, or the sizes of such sets.

Recent advances in experimental technology, especially optogenetics, have
allowed researchers to achieve precise selective activation and deactivation of
specific subsets of neurons, even those of live animals, and to record changes
in the behavior of such neurons as a result of such perturbations (see,
e.g.~\cite{zhang_closed-loop_2018,mardinly_precise_2018,forli_two-photon_2018}
for some recent advances related to these techniques).  Thus, while studying
vital or degenerate neurons for a given behavior, a researcher may be able to
collect data of the form: when a particular subset $S$ of neurons is
deactivated, is the behavior being studied still displayed?

This setting leads to an algorithmic question: given a description of the
\nckt{} and of the behavior under study (which will typically be encoded as the
eventual activation of some output neurons), determine the properties of the
vital or degenerate sets of neurons for that behavior.  A particular algorithm
for this problem may follow the above strategy of reading out the behavior of
the \nckt{} in response to selective activation and deactivation of specific
neurons.

Given the large number of perturbations possible, it is not surprising, however,
that the spectre of combinatorial explosion does cloud this strategy: we briefly
mention three works in this direction here.  Koch~\cite{koch_modular_2012}
highlighted combinatorial explosion as being a roadblock in understanding the
behavior of general biological systems with heterogeneous components.  In 2013,
Vlachos et al.~\cite{vlachos_neural_2013} proposed a ``prediction and
identification challenge'', where they invited readers to determine the
functionalities of synthetic neural circuits using a set of allowed
observations.  Kumar et al.~\cite{kumar_challenges_2013} noted the possibility
of combinatorial explosion specifically in the setting of perturbative
experiments in neuroscience by ``selective modulation'', especially pointing to
the recurrent nature of \nckts{} as a possible source of difficulty.  We remark
here that by its nature, a \nckt{} is \emph{recurrent} in the sense that the
activation state of a neuron at a given time can depend upon its own state at a
previous time, and it is also in this aspect that models of \nckts{} differ
qualitatively from the usual circuit models in computational complexity theory.

In a recent paper, Ramaswamy~\cite{Ramaswamy2019} took a step towards studying
this algorithmic difficulty in the context of computational theory.  Starting
with formal notions of \emph{vital sets} and \emph{degenerate sets} in the
context of models of \nckts{}, he formalized the following problems (here,
following Ramaswamy's notation, ``$k$-vital set'' denotes a vital set of size
exactly $k$; we defer formal definition to \cref{sec:preliminaries}).  Given as
an input a description of the \nckt{}, determine
\begin{enumerate}
\item whether there is degenerate circuit of size $k$. 
\item whether there is a minimal degenerate circuit of size $k$.
\item a minimum size degenerate circuit.
\item the set of $1$-vital sets.
\item whether there is minimal $k$-vital set.\label{prob:min-k-vital}
\item the number of minimal $k$-vital sets.\label{prob:count-min-k-vital}
\end{enumerate}
The main result of Ramaswamy's paper~\cite{Ramaswamy2019} is that each of these
problems is NP-hard.

\subsection{Our contributions}
\label{sec:our-contributions}
We show that problems 1-4 above are in fact PSPACE-hard. We also show that it is PSPACE-hard to find $(c\cdot \log n)$-approximate minimum degenerate circuit, for any constant $c$ and $n$ is size of neural circuit. In fact, the PSPACE
hardness for these problems turns out to be an immediate corollary of our main
result, which establishes the PSPACE hardness for a superficially much simpler
problem: that of \emph{simulating} a \nckt{}.

More specifically, we use a standard, synchronous, discrete-time model of
\nckts{} where each neuron is a vertex in a directed graph (which is \emph{not}
necessarily acyclic, in order to account for the recurrent nature of \nckts{}).
At each time $t$, every neuron computes a Boolean function of the activation
states at time $t-1$ of those neurons from which it receives an incoming edge
(in other words, we have a uniform \emph{conduction delay} of $1$).  We further
restrict these Boolean function to be threshold functions with coefficients and
threshold potential coming from a fixed, constant size set of small
integers. Such neurons are referred to as \emph{threshold} neurons in the
literature.  Further, the \nckt{} has a specified input neuron $I$ with no
incoming edges, and a specified output neuron $O$ with no outgoing edges.  The
\nSimulate{} problem asks: Given a \nckt{} $C$ as above as input, and given the
initial state where at time $t = 0$, neuron $I$ is \emph{stimulated} (i.e., set
to $1$), and all other neurons in $C$ are not stimulated (i.e., set to $0$), is
there a future time $t \geq 0$ when $O$ becomes stimulated (i.e., set to $1$)?
Our main result then says:
\begin{theorem}\label{thm:main:intro}
  The problem \nSimulate{} is PSPACE-complete.
\end{theorem}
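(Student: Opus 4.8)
The plan is to prove the two directions of PSPACE-completeness separately.

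\textbf{Containment.} A \nckt{} on $n$ neurons evolves deterministically and its global state lies in $\{0,1\}^n$, so the trajectory from the fixed start state (only $I$ stimulated) is eventually periodic with pre-period plus period at most $2^n$; hence $O$ is ever stimulated if and only if it is stimulated within the first $2^n$ synchronous updates. I would therefore simulate those updates one at a time, storing only the current $n$-bit state and an $(n+1)$-bit step counter, each update being trivially computable since every neuron applies a fixed threshold function to its $O(1)$ in-neighbours. This runs in $\poly{n}$ space, so $\nSimulate{}$ is in PSPACE.

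\textbf{Hardness.} I would reduce from the acceptance problem of a fixed deterministic Turing machine $M$ that decides a PSPACE-complete language in space $s = s(\abs{x}) = \poly{\abs{x}}$ (equivalently, from TQBF). On input $x$ the reduction outputs a \nckt{} $C_{M,x}$ of size $\poly{\abs{x}}$ whose global state is meant to encode a configuration of $M$ on $x$, whose dynamics over a fixed block of $d$ neural steps implement one step of $M$, and whose output $O$ is a single threshold gate firing exactly when the encoded control state is accepting; we make $M$'s halting states self-looping so the transition stays total. Since $M$ has only $2^{O(s)}$ configurations, $M$ accepts $x$ iff it ever reaches an accepting configuration iff, in $C_{M,x}$, $O$ becomes stimulated at some time (if $M$ rejects, no reachable configuration is accepting, so $O$ never fires).

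To build $C_{M,x}$ I would encode a configuration by $O(s)$ neurons for the tape contents (constant alphabet), $s$ neurons holding the head position \emph{in unary}, and $O(1)$ neurons for the control state. The unary head encoding makes the transition \emph{local}: the next tape/head/state bits near cell $j$ are a fixed Boolean function of only the old contents of cells $j-1,j,j+1$, the old head indicators near $j$, and the control-state bits; each such function is computed by a constant-size bounded-fan-in AND/OR/NOT circuit (each gate a threshold neuron with weights from a fixed finite set), so the whole transition becomes a leveled bounded-fan-in threshold subcircuit of some fixed depth $d$.

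The step I expect to be the main obstacle is timing. Since the transition is in general not a single halfspace (and since bounded fan-out forces signals through relay trees), $d>1$, so the ``configuration register'' neurons must \emph{hold} their values for $d$ consecutive steps while the transition subcircuit fills its levels and then update all at once. I would do this with a small mod-$d$ clock (a short ring, or a $\ceil{\log d}$-bit counter, of threshold neurons) pulsing every $d$ steps, together with register neurons realized as ``load-or-hold'' latches: a threshold neuron with a weight-$1$ self-loop, the incoming new-value wire, and the clock pulse as inputs, with weights and threshold from a fixed finite set chosen so that on a pulse it copies the new value and otherwise retains its own value — so the garbage intermediate values in the transition subcircuit never leak prematurely into the register. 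Because the clock pulse and the control-state bits must reach all $\poly{s}$ local gadgets, I would route them through \emph{balanced} fan-out trees of relay neurons of uniform depth (so every gadget sees them in the same phase) and pad the transition subcircuit so every source-to-register path has length exactly $d$; this is also what keeps every neuron's in- and out-degree $O(1)$. Finally, to turn the ``all off except $I$'' start state into the initial configuration $C_0$ of $M$ on $x$, the initial stimulus at $I$ would fire a startup gadget that both starts the clock in the right phase and drives a one-shot ``load'' signal (again through a balanced fan-out tree) forcing each register neuron to its hardcoded bit of $C_0$. What remains is routine bookkeeping: verifying that, once the phases are aligned, the register cycles exactly through $C_0, C_1, C_2, \dots$ (a new configuration every $d$ steps), that all weights and thresholds lie in one fixed finite set of small integers, that every neuron has bounded degree, and that $C_{M,x}$ is computable from $x$ in polynomial (indeed logarithmic) space. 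Together with the containment argument, this shows $\nSimulate{}$ is PSPACE-complete.
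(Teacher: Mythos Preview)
Your containment argument is essentially the paper's: deterministically simulate for $2^{|V|}$ steps, storing only the current state and a step counter.

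For hardness, however, you and the paper take genuinely different routes. The paper reduces directly from TQBF: it builds an $n$-bit \emph{counter} gadget whose designated neurons cycle through all $2^n$ assignments to $x_1,\dots,x_n$, feeds those into clause gates, and then threads the results through a hand-crafted chain of neurons $s_{i,0}, s_{i,0}', s_{i,1}$ (with auxiliary ``reset'' pulses $p_i,q_i$) that implement the quantifier alternation level by level --- an $\exists$ at level $i$ effectively OR's over the two halves of the counter's period at bit $i$, a $\forall$ AND's them. Your approach instead simulates an arbitrary polynomial-space Turing machine step by step, encoding the full configuration (tape, unary head, control) in the neuron states and using a mod-$d$ clock plus latched register so that one TM step corresponds to $d$ neural steps. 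Your route is more generic and makes the ``recurrence $\Rightarrow$ PSPACE'' intuition explicit; the paper's route avoids all the clock/latch/phase-alignment machinery, at the cost of a construction tailored to TQBF and a fairly delicate inductive timing analysis of the $s_{i,\cdot}$ neurons.

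One technical slip to fix: your ``load-or-hold'' latch cannot be a \emph{single} threshold neuron. The function $\mathrm{mux}(c,v,s) = (c\land v)\lor(\bar c\land s)$ is not linearly separable --- the positive inputs $(c,v,s)=(0,0,1),(1,1,0)$ and the negative inputs $(0,1,0),(1,0,1)$ have equal coordinate sums, so no halfspace separates them. You need two or three threshold gates per register bit (compute $c\land v$ and $\bar c\land s$ separately, then OR), which only bumps $d$ by a constant and keeps degrees and weights bounded. With that correction your plan goes through.
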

As stated above, this result holds even when each neuron is restricted to be a threshold neuron with constant coefficients and threshold potential.  Further, it also holds when each neuron in the input circuit is
constrained to have at most $6$ connections.  We also emphasize that the
recurrent nature of the \nckts{} is the main feature underlying the result.

A more formal description of the result and the required technical definitions
can be found in \cref{sec:preliminaries}.  The proof of the main theorem appears
in \cref{sec:comp-hardn-simul} (as the proof of
\cref{thm:th-neural-ckt-simulation-is-PSPACE-hard}).  The PSPACE-hardness of
problems 1-4 in Ramaswamy's list, claimed above, is an easy consequence of the
PSPACE-hardness of the \nSimulate{} problem.  For completeness, we provide the
proofs at the end of \cref{sec:comp-hardn-simul}.

\subsection{Related work}
We conclude the introduction with a brief description of some related work.
Various models have been proposed to model the computational aspects of
biological neurons; we refer to the book chapter by Koch, Mo and
Softky~\cite{koch02:_handb_brain_theor_neural_networ} for a concise comparison
of various such models.  An important class among these is of models based on
thresholds, where each neuron computes either a linear or polynomial threshold
function at each time.  Low-degree polynomial threshold neurons (known as
\emph{sigma-pi} neurons in the literature) were proposed to improve upon linear
threshold neurons (also known as \emph{McCulloch–Pitts} neurons) with respect to
the modeling of ``dendritic trees'': we refer to the survey of Mel~\cite[section
4.4.4]{mel_information_1994} for further discussion of the motivation behind
different threshold models.

A particular linear threshold model that has been extensively studied is the
\emph{spiking neuron model}~\cite{maass1996lower,maass1997networks}.  This is a
continuous time model where each neurons performs a weighted integration (over
time) of ``spikes'' of activation that it observes from other neurons it is
connected to, and fires only at those times when the integral crosses a
threshold.  Maass~\cite{maass1996lower} showed that this model can be immensely
powerful: for a given $d$, there is a \emph{fixed} spiking neural network $N(d)$, such that for \emph{any} Turing machine $M$ with $d$ tapes, there is a
suitable rational assignments of weights that each neuron in $N(d)$ uses to
weigh its different neurons, such that with the corresponding weights, $N(d)$
can simulate $M$ in the sense that it can encode the output of $M$ in the
(continuous time) timing of its spike activity (we refer
to~\cite{maass1996lower} for further details).  Note, however, that this is in
sharp contrast to the model studied in our paper: here, the conduction delays
between neurons are uniform and fixed to be $1$, and further, the weights used
by the neurons are drawn from a fixed, constant size set.  The spiking neuron
model has also been the subject of some recent work dealing with questions of
asynchronous computation~\cite{hitron2020computational,hitron2019counting}.

Finally, we refer to the paper by Schmitt~\cite{schmitt_computing_1998} for
theoretical comparisons between the power of various models of neurons.

\section{Preliminaries}
\label{sec:preliminaries}
\begin{definition}[\textbf{\Nckts}]
  A \emph{\nckt} is a directed graph $G = (V, E)$, with each vertex $v \in V$
  equipped with a Boolean function
  $f_v: \inb{0, 1}^{\parent{v}} \rightarrow \inb{0,1}$. Here, $\parent{v}$
  denotes the set of those vertices $u$ in $V$ for which the directed edge
  $(u, v)$ is present in $E$.  Following standard convention we also refer
  to the vertices as \emph{neurons}. Each \nckt\ has two specially designated
  neurons: the \emph{input neuron} $I$ and the \emph{output neuron} $O$.  The
  \emph{state} of a \nckt\ at time $t$ is a specification of a bit
  $v(t) \in \inb{0,1}$ for each neuron $v$ in the circuit.  A neuron $v$ is said
  to be \emph{stimulated} at time $t$ if $v(t) = 1$, and \emph{non-stimulated}
  otherwise.
\end{definition}

We now proceed to define the dynamics of a \nckt.  At time $t = 0$, we set the
state of the circuit such that $I(0) = 1$ for the input neuron $I$ and
$v(0) = 0$ for all other neurons $v$.  For $t \geq 0$, the state of the circuit
at time $t+1$ is obtained by each vertex $v$ evaluating its function $f_v$ based
on inputs at time $t$.  Formally, given $v \in V$, suppose
$\parent{v} = \{u_1, u_2, \dots, u_k\}$.  Then we have
\begin{equation}
  v(t+1) = f_v\inp{u_1(t), u_2(t),\dots,u_k(t)}.\label{eq:1}
\end{equation}
In other words, the \nckts\ we consider have a \emph{conduction delay} of one
unit time on every edge: the stimulation state of any neuron $v$ at a given time
instant $t$ is available to all other neurons that have $v$ as a parent at time
$t + 1$.  We will therefore often write the update equation (\cref{eq:1}) in the
following abbreviated form:
\begin{equation}
  \label{eq:2}
  v \leftarrow f_v(u_1, u_2, \dots, u_k).
\end{equation}

\begin{definition}[{\textbf{\Ntv{} \nckts}}]\label{def:ntv-ckts}
  We say that a neural circuit $G=(V, E)$ is \emph{\ntv} if, starting from the
  configuration $\sigma_0$ at time $t = 0$ in which only the input node $I$ is
  stimulated (i.e., $\sigma_0(I) = 1$ and $\sigma_0(v) = 0$ for all
  $v \neq I$), there is a time $t > 0$ at which the output node is
  stimulated, i.e., $O(t) = 1$.
\end{definition}

\begin{definition}[\textbf{Threshold Function}]
\label{def:threshold-function}
 A Boolean function $f: \inb{0, 1}^k \rightarrow \inb{0, 1}$ is said to be a
\emph{threshold function} if there are (possibly negative) integers
$b, w_1, w_2, \dots, w_k$ such that
\begin{displaymath}
  f(x_1, x_2, \dots, x_k) = \I{\sum_{i=1}^kw_i x_i \geq b} \defeq
  \begin{cases}
    1 & \text{ when $\sum_{i=1}^kw_i x_i \geq b$}.\\
    0 & \text{ when $\sum_{i=1}^kw_ix_i < b$}.
  \end{cases}
\end{displaymath}
The parameter $b$ is referred to as the \emph{threshold potential} of $f$, while
the parameters $w_1, w_2, \dots, w_k$ are referred to as its \emph{weights}.  By
a slight abuse of terminology, we will use the phrase ``$f$ has weights of
absolute value at most $W$'' to signify that
$\abs{b}, \abs{w_1}, \dots, \abs{w_k} \leq W$.
\end{definition}

\paragraph{Threshold \nckts.}
\label{sec:threshold-nckts}
In any realistic model of \nckts{}, the Boolean functions that each neuron is
allowed to compute at a given time step must be suitably constrained.  In this
paper, we constrain the neurons in our \nckts{} to be compute only threshold function.  
In our computational complexity results, we will further restrict the neurons
appearing in our reduction to only draw their weights from a fixed,
constant sized set. 

A \nckt\ is said to be a \emph{threshold \nckt} if, for every neuron
$v$ in the \nckt, the corresponding function $f_v$ is a threshold
function.  We note here the well known fact that well known Boolean \textsc{And} and \textsc{OR}
functions can be represented by threshold functions
with small weights.  In particular, we have
 $x_1 \lor x_2 = [x_1 + x_2 \geq 1]$,
and $x_1 \land x_2 = [x_1 + x_2 \geq 2]$.

We briefly remark on two aspects of the \nckt{} model considered in this paper.
The first is that all conduction delays are uniformly set to $1$.  This is in
contrast, e.g., to the spiking neuron model where the conduction delays can vary
over edges.  Since our goal in this paper is to establish complexity theoretic
hardness results for algorithmic problems on \nckts, the use of a simplified
model of conduction delay only serves to strengthen our results: we show that
these algorithmic problems remain hard even for simplified model of \nckts\ that
we consider.

The second aspect is regarding comparisons with the usual Boolean circuit model
in computational complexity theory.  As discussed in the introduction, the
hardness results for \nckt{} arise as consequence of these circuits being
recurrent (the directed graph underlying a \nckt{} is not required to be
acyclic).  This is the main point of departure from the usual Boolean circuit
model, and accounts for the significantly higher computational complexity of
problems addressing \nckt{} models (e.g., the simulation problem for Boolean
circuits is trivially in $P$, in contrast to the PSPACE-hardness result for the
simulation of \nckts{} proven here).

\subsection{The problems}

Finally, we give a formal description of the algorithmic problems studied in
this paper.

\begin{problem}[\nSimulate]
  \label{ques:neural-circuit-simulation}
  \textbf{INPUT:} A threshold \nckt\ $G = (V, E)$ with input node $I$
  and output node $O$ along with the threshold functions $f_v$ at the
  neurons $v \in V$.  The weights of the threshold
  functions  are integers, and are
  drawn from a constant size set fixed in advance.

  \noindent \textbf{OUTPUT:}

  YES: if $G$ is non-trivial. That is, if, with the stimulation state in which
  only $I$ is stimulated (and none of the other nodes are stimulated), the
  neural circuit can reach a state in which the node $O$ is stimulated.

  NO: otherwise.
\end{problem}

Before describing the other problems, we recall the formalization of the notions
of \emph{vital sets} and \emph{degenerate circuits} due to
Ramaswamy~\cite{Ramaswamy2019}.  We first formalize the notion of
\emph{deactivating} or \emph{silencing} a subset of neurons in a \nckt{}.
\begin{definition}[\textbf{Silencing of a subset of neurons of a
    \nckt{}}] \label{def:silencing-neurons} Given a \nckt\ $G = (V, E)$ and a subset
  of neurons $S$, silencing of the set $S$ means that for all $t\geq 0$, and for
  all neurons $v\in S$, $\sigma_t(v)=0$, i.e., neurons in $S$ never stimulate.
\end{definition}

We now recast Ramaswamy's definitions in our terms.  Recall that a \ntv{}
\nckt{} (\Cref{def:ntv-ckts}) is one in which starting from the initial condition
in which only the input neuron $I$ is stimulated, there is a future time $t$ at
which the output neuron $O$ gets stimulated.

\begin{definition}[\textbf{Degenerate Circuit}\cite{Ramaswamy2019}]
\label{def:degenerate-circuit}
Given a \nckt\ $G = (V, E)$, a set $N \subseteq V$ of neurons is said to constitute
a degenerate circuit for $G$ if the circuit obtained by silencing the neurons in
$V \setminus N$ is \ntv{} if and only if $G$ is \ntv{}.  We assume that the
input neuron $I$ and the output neuron $O$ are always contained in any degenerate
circuit.  A \emph{minimal} degenerate circuit for $G$ is a degenerate circuit
$N$ of $G$ such that no proper subset $N'$ of $N$ forms a degenerate circuit of
$G$.  A \emph{minimum} degenerate circuit for $G$ is a degenerate circuit of $G$
of minimum size.
\end{definition}
A degenerate circuit gives us the notion of a sub-circuit of a \ntv{} \nckt{}
$C$ that is capable of showing the same behavior as $C$. Note also that given a
\nckt{} $C$ there always exists at least one degenerate circuit $C$, which is
$C$ itself.

Note also that given a \nckt{}, finding a minimum degenerate circuit is
equivalent to finding a sub-circuit of the smallest size which is capable of
showing the same behavior as the original circuit.

\begin{definition}[\textbf{Vital Set} \cite{Ramaswamy2019}]
\label{def:vital-set}
Given a \nckt{} $G = (V, E)$, a set of neurons $S\subseteq V \setminus \inb{I, O}$ is
said to be a vital set of neurons if it has a non-empty intersection with every
degenerate circuit of the neural circuit $G$.  A vital set of size $k$ is called
a $k$-vital set.  A \emph{minimal} vital set $S$ is a vital set of $G$ such that
no proper subset $S'$ of $S$ is a vital set of $G$. A \emph{minimum} vital set
is a vital set of minimum size.
\end{definition}
Note that it follows from the definition that if $S$ is a nonempty vital set of a \ntv{}
\nckt{}, then silencing the nodes in $S$ will ensure that starting from the
initial condition in which only the input node $I$ is set to $1$, the output
node $O$ will never stimulate.  We can now list the computational problems
formalized by Ramaswamy~\cite{Ramaswamy2019}, as described informally in the
introduction.

\begin{problem}[\textsc{$k$-Degenerate-Circuit}]
  \label{ques:k-size-degenerate-circuit}
  \textbf{INPUT :} A neural circuit $G = (V, E)$ with an input neuron $I\in V$,
  an output neuron $O \in V$ and a positive integer $k$ ($k\geq 2$).
  
  \textbf{OUTPUT :} A degenerate circuit $N$ of $G$ of size $k$.
\end{problem}

\begin{problem}[\textsc{Minimal-Degenerate-Circuit}]
  \label{ques:minimal-degenerate-circuit}
  \textbf{INPUT :} A neural circuit $G = (V, E)$ with an input neuron $I\in V$, an
  output neuron $O \in V$.
  
  \textbf{OUTPUT :} A minimal degenerate circuit $N$ of $G$.
\end{problem}

\begin{problem}[\textsc{Minimum-Degenerate-Circuit}]
  \label{ques:minimum-degenerate-circuit}
  \textbf{INPUT :} A neural circuit $G = (V, E)$ with an input neuron $I\in V$ and
  an output neuron $O \in V$.
  
  \textbf{OUTPUT :} A minimum degenerate circuit $N$ of $G$.
\end{problem}

\begin{problem}[\textsc{$k$-Vital-Set}]
  \label{ques:vital-set}
  \textbf{INPUT :} A neural circuit $G = (V, E)$ with an input neuron $I\in V$ and
  an output neuron $O \in V$ and a number $k$.
  
  \textbf{OUTPUT :} A $k$-vital set $N$ of $G$.
\end{problem}

\begin{problem}[\textsc{1-Vital-Sets}]
  \label{ques:1-vital-set}
  \textbf{INPUT :} A neural circuit $G = (V, E)$ with an input neuron $I\in V$ and
  an output neuron $O \in V$.
  
  \textbf{OUTPUT :} The set of neurons $N\subseteq V$, such that for every
  $v \in N$, the set $\inb{v}$ is a $1$-vital set of $G$, and such that for
  every $v \not\in N$, the set $\inb{v}$ is not a $1$-vital set of $G$.
\end{problem}

\begin{problem}[\textsc{$k$-Degenerate-Circuit-Decision}]
  \label{ques:k-size-degenerate-circuit-decision}
  \textbf{INPUT :} A neural circuit $G = (V, E)$ with an input neuron $I\in V$ and
  an output neuron $O \in V$.
  
  \noindent \textbf{OUTPUT:}

  YES: if there exists a degenerate circuit of $G$ of size $k$.

  NO: otherwise.
\end{problem}

\begin{problem}[\textsc{Minimal-Degenerate-Circuit-Decision}]
  \label{ques:minimal-degenerate-circuit-decision}
  \textbf{INPUT :} A neural circuit $G = (V, E)$ with an input neuron $I\in V$, an
  output neuron $O \in V$.
  
  \noindent \textbf{OUTPUT:}

  YES: if there exists a minimal degenerate circuit of $G$ of size at least $3$.

  NO: otherwise.
\end{problem}

\begin{problem}[\textsc{Minimum-Degenerate-Circuit-Decision}]
  \label{ques:minimum-degenerate-circuit-decision}
  \textbf{INPUT :} A neural circuit $G = (V, E)$ with an input neuron $I\in V$ and
  an output neuron $O \in V$.
  
  \noindent \textbf{OUTPUT:}

  YES: if the size of any minimum degenerate circuit of $G$ is at least $3$.

  NO: otherwise.
\end{problem}

\begin{problem}[\textsc{1-Vital-Sets-Decision}]
  \label{ques:1-vital-set-decision}
  \textbf{INPUT :} A neural circuit $G = (V, E)$ with an input neuron $I\in V$ and
  an output neuron $O \in V$.
  
  \noindent \textbf{OUTPUT:}

  YES: if the size of the set of 1-vital sets of $G$ is non-empty, i.e., if
  there is a vertex $v \in V \setminus \inb{I, O}$ such that every degenerate
  circuit of $G$ includes $v$.

  NO: otherwise.
\end{problem}

Finally, we record the following standard result, which will provide the source
problem for our PSPACE-hardness reductions.

\begin{theorem}[\textbf{True Quantified Boolean Formula (TQBF) is PSPACE-complete}]
\label{def:tqbf}
The following problem is PSPACE-complete.

\textbf{INPUT:} A fully quantified Boolean formula
\[
  \exists x_n \forall x_{n-1} \exists x_{n-2} \dots \forall x_2 \exists x_1 \phi(x_1, x_2, \dots, x_n)
\]
where $\phi(x_1, x_2, \dots, x_n)$ is a $3$-CNF formula in $n$ variables such
that each variable $x_i$ appears at most $4$ times in $\phi$.  We also assume
that $n$ is \emph{odd}, and that the quantifiers alternate strictly: the
$i$\textsuperscript{th} quantification is $\exists x_i$ if $i$ is odd, and
$\forall x_i$ if $i$ is even.  We further assume that
$\phi(x_1, x_2, \dots, x_n) = 0$ when $x_1 = x_2 = \ldots = x_n = 0$.

\textbf{OUTPUT:} YES: if the input quantified Boolean formula is true, NO
otherwise.
\end{theorem}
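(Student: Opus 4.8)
The plan is to reduce from the unrestricted version of TQBF --- a fully quantified Boolean formula with an arbitrary prefix over an arbitrary quantifier-free matrix --- which is PSPACE-complete by the classical theorem of Stockmeyer and Meyer, and then to put an arbitrary instance into the required normal form by a short chain of polynomial-time, truth-value-preserving transformations. First I would apply the standard Tseitin (``extension variable'') transformation to the matrix: for each gate $g = h_1 \wedge h_2$ (and analogously for $\vee$ and $\neg$) introduce a fresh variable and the usual clauses of width at most three asserting $g \leftrightarrow h_1 \wedge h_2$, and add one unit clause asserting that the variable $g_{\mathrm{root}}$ of the output gate is true. This produces a $3$-CNF matrix $\phi_0$ over the original variables $\vec x$ together with the fresh gate variables $\vec y$, with the property that for every assignment to $\vec x$ there is a \emph{unique} extension to $\vec y$ satisfying $\phi_0$, and it satisfies $\phi_0$ exactly when the original matrix was satisfied. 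Hence, appending the block $\exists \vec y$ at the inner end of the prefix, the original QBF is equivalent to $Q_1 x_1 \cdots Q_m x_m\, \exists \vec y\, \phi_0$: a block of existentials whose values are functionally determined by the preceding variables can be added freely.

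Two of the remaining conditions are essentially free. The all-zeros assignment sets $g_{\mathrm{root}} = 0$ and therefore violates the root unit clause, so $\phi_0$ is already false at all-zeros; I would only need to make sure this clause (after being widened to width three by repeating its literal) survives all later steps and is not made true at all-zeros by any of them --- which is easy to check. The bounded-occurrence condition needs real work, because in $\phi_0$ a non-output gate variable lies in its at most three own defining clauses and in two of its parent's defining clauses, for up to five clauses, while an input variable with $t$ occurrences in the matrix ends up in $2t$ clauses. For each variable $z$ occurring in five or more clauses $C_{j_1}, \dots, C_{j_r}$, I would introduce fresh copies $z^{(1)}, \dots, z^{(r)}$, replace the occurrence of $z$ in $C_{j_\ell}$ by $z^{(\ell)}$, and conjoin the ``equality cycle'' consisting of $(\neg z \vee z^{(1)})$, $(\neg z^{(1)} \vee z^{(2)})$, \dots, $(\neg z^{(r)} \vee z)$ (each widened to width three by literal repetition), which forces $z = z^{(1)} = \dots = z^{(r)}$. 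Afterward $z$ lies in only two clauses and each $z^{(\ell)}$ in three, so every variable is in at most four clauses (the output gate variable being the unique one that can still reach four). The essential point is that each copy $z^{(\ell)}$ must be quantified \emph{existentially}, placed immediately after the quantifier binding $z$, \emph{even when $z$ is universal}: the existential player, which controls the copies, is always incentivized to set them equal to $z$, since otherwise one of the equality-cycle clauses --- a conjunct of the matrix --- is falsified; I would justify correctness by translating winning strategies for the evaluation game across this transformation.

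Finally I would repair the quantifier pattern. The prefix obtained so far is the original alternating prefix with an existential block (a variable's copies) spliced in after each variable, followed by one existential block (the gate variables together with their copies). To make it strictly alternate I would insert, between any two consecutive quantifiers of the same kind, a fresh dummy variable of the opposite kind that does not occur in the matrix (or occurs only in a single tautological clause, should one insist that all variables occur); such variables can be added with either quantifier without changing the truth value. If the outermost quantifier is then $\forall$, prepend an existential dummy; the innermost block is existential and stays so under splicing, so after these steps the prefix is strictly alternating with an existential quantifier at both ends, which forces the number of variables to be odd. Re-indexing yields exactly $\exists x_n \forall x_{n-1} \cdots \forall x_2 \exists x_1$ with $n$ odd, over a $3$-CNF matrix in which every variable occurs at most four times and which is false at all-zeros; correctness follows by composing the routine ``functionally-determined existentials are free'' and ``dummies are free'' arguments with the strategy-translation argument above.

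The step I expect to be the main obstacle is the bounded-occurrence reduction in the presence of universal quantifiers --- specifically, verifying that replacing occurrences of a \emph{universal} variable by fresh \emph{existential} copies tied together by equality clauses preserves the truth value, which requires a careful argument in terms of the two-player semantics of QBF rather than a one-line syntactic manipulation. A secondary, bookkeeping difficulty is respecting all the constraints simultaneously: for instance, one is tempted to pad short clauses with fresh variables forced to a constant, but that reinflates the occurrence counts, so one should instead pad by repeating literals, and one must track the all-zeros-falsifying conjunct and the occurrence count of every variable through every transformation.
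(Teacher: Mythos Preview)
Your proposal is correct and follows the same outline the paper sketches: Tovey-style variable copies with an equality chain for the occurrence bound (the paper simply cites Tovey), dummy quantified variables to enforce strict alternation and odd $n$, and a dedicated device for the all-zeros condition. The only differences are cosmetic: you start from arbitrary-matrix QBF and run Tseitin, which gives you all-zeros-falsity for free via the root unit clause, whereas the paper starts from $3$-CNF QBF and conjoins a fresh clause on fresh existentially quantified variables; and you explicitly work out the one genuinely QBF-specific point the paper glosses over---that the copies of a \emph{universal} variable must be quantified existentially immediately after it, with correctness argued through the two-player evaluation game---which is exactly the right place to focus your care.
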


We briefly mention the standard methods by which the constraints assumed for the
TQBF instance can be enforced.  The constraint that each variable appears a
bounded number of times in $\phi$ can be enforced in the same manner as that for
3-SAT~\cite{tovey1984simplified}. To ensure that $\phi$ evaluates to false when
all variables are set to $0$, we replace $\phi$ by
$\phi' \defeq \phi\land (\overline{x_{n+1}} \lor \overline{x_{n+1}} \lor
\overline{x_{n+1}})$, where $x_{n+1}, x_{n+2}$ and $x_{n+3}$ are fresh variables
not appearing in $\phi$, and introduce existential quantification over these
fresh variables.  The resulting TQBF instance is true if and only if the
original instance was true (note that the position at which the existential
quantifiers on $x_{n+1}, x_{n+2}$ and $x_{n+3}$ are introduced does not matter
for this deduction).  Finally, to ensure that $n$ is odd and that the
quantifiers alternate strictly, we add extra dummy variables (which do not
appear in $\phi'$) along with the requisite quantifiers so as to enforce strict
alternation.

\section{Computational hardness of simulating neural circuits}
\label{sec:comp-hardn-simul}
In this section, we prove our main result: the PSPACE-hardness of the problem
\nSimulate{}.  We start with a description of a simple counter gadget that will
be useful in the reduction.

\subsection{The counter threshold \nckt{}}
\label{sec:counter-nckt}
In this subsection, we will give a construction of a bounded-degree threshold
counter \nckt{} which satisfies the following requirements: given a positive
integer $n$, we require a gadget with an input neuron $I$ and a set of $n$
specified neurons such that when the \nckt{} is started in the initial state
where $I = 1$ and all other neurons are set to $0$, the activation states of the
$n$ specified neurons go through, in sequence, each of the binary integers from
$0$ to $2^{n}-1$, possibly after a warm-up time known in advance.  In
\Cref{cor:th-stimulation-of-x_i_(n+1-i)}, we show that the gadget given here has
this property.

Note also that all the neurons in the constructed counter gadget compute
threshold functions, and further, the weights of these threshold functions are
integers coming from the set $\{0,1,-1,-2\}$.  Further the maximum degree of any
neuron in the construction is at most $6$ (see \cref{fig:threshold-counter}).

\begin{figure}[ht]
  \centering
  \begin{tikzpicture}
\node[latent](i){$I$};
\node[latent, right=0.9 of i](x20'){$x_{2,0}'$};
\node[latent, below=2.0 of i](x10){$x_{1,0}$};
\node[latent, below left=0.8 and 0.5 of i](x10'){$x_{1,0}'$};
\node[latent, below=2.0 of x20'](x20){$x_{2,0}$};

\node[latent, right=1.0 of x20'](y2'){$y_2'$};
\node[latent, right=1.0 of y2'](y2){$y_2$};
\node[latent, below left= 0.8 and 0.5 of y2](a2){$a_2$};
\node[latent, below right=0.8 and 0.5 of y2](b2){$b_2$};
\node[latent, below=2.0 of y2](x30){$x_{3,0}$};

\node[latent, right=1.5 of y2](y3'){$y_3'$};
\node[latent, right=1.0 of y3'](y3){$y_3$};
\node[latent, below left=0.8 and 0.5 of y3](a3){$a_3$};
\node[latent, below right=0.8 and 0.5 of y3](b3){$b_3$};
\node[latent, below=2.0 of y3](x40){$x_{4,0}$};

\node[latent, right=1.1 of y3](y4'){$y_4'$};
\node[latent, right=2.0 of y4'](yn1'){$y_{n-1}'$};
\node[latent, right=1.0 of yn1'](yn1){$y_{n-1}$};
\node[latent, below left=0.8 and 0.5 of yn1](an1){$a_{n-1}$};
\node[latent, below right=0.8 and 0.5 of yn1](bn1){$b_{n-1}$};
\node[latent, below=2.0 of yn1](xn0){$x_{n,0}$};

\edge{i}{x10};
\draw[<->] (x10) -- (x10');
\edge{i}{x20'};
\draw[<->] (x20') -- (x20);
\draw[->] (x20) to [out=0,in=270] (y2');

\edge{y2'}{y2};
\edge{y2}{a2};
\edge{y2}{b2};
\draw[<->] (a2) -- (x30);
\draw[<->] (b2) -- (x30);

\draw[->] (x30) to [out=0,in=270] (y3');
\edge{y2}{y3'};
\edge{y3'}{y3};
\edge{y3}{a3};
\edge{y3}{b3};
\draw[<->] (a3) -- (x40);
\draw[<->] (b3) -- (x40);

\draw[->] (x40) to [out=0,in=270] (y4');
\edge{y3}{y4'};
\node[right=0.3 of y4'](y4){};
\node[left=0.3 of yn1'](yn2){};
\draw[dotted,-] (y4) -- (yn2);
\node[right=0.5 of b3](a3){};
\node[left=0.5 of an1](bn2){};
\draw[dotted,-] (a3) -- (bn2);
\node[below=0.3 of yn1'](yn2a){};
\edge{y4'}{y4};
\edge{yn2}{yn1'};
\edge{yn2a}{yn1'}

\edge{yn1'}{yn1};
\edge{yn1}{an1};
\edge{yn1}{bn1};
\draw[<->] (an1) -- (xn0);
\draw[<->] (bn1) -- (xn0);

\node[right=1.0 of x40](x50){};
\node[left=1.0 of xn0](xn'){};
\draw[dotted,-] (x50) -- (xn');

\node[latent, below=0.7 of x10](x01){$x_{1,1}$};
\node[latent, below=0.7 of x01](x02){$x_{1,2}$};
\node[latent, below=0.7 of x02](x03){$x_{1,3}$};
\node[latent, below=0.7 of x03](x04){$x_{1,4}$};
\node[latent, below=0.7 of x04](x05){$x_{1,5}$};
\node[latent, below=0.7 of x05](x06){$x_{1,6}$};
\node[latent, below=3.0 of x06](x02n-1){$x_{1,2n-1}$};
\node[latent, below=1.0 of x02n-1, scale=1.4](x02n){$x_{1,2n}$};
\edge{x10}{x01};
\edge{x01}{x02};
\edge{x02}{x03};
\edge{x03}{x04};
\edge{x04}{x05};
\edge{x05}{x06};
\draw[dotted,->] (x06) -- (x02n-1);
\edge{x02n-1}{x02n};

\node[latent, right=0.9 of x01](x11){$x_{2,1}$};
\node[latent, below=0.7 of x11](x12){$x_{2,2}$};
\node[latent, below=0.7 of x12](x13){$x_{2,3}$};
\node[latent, below=0.7 of x13](x14){$x_{2,4}$};
\node[latent, below=0.7 of x14](x15){$x_{2,5}$};
\node[latent, below=0.7 of x15](x16){$x_{2,6}$};
\node[latent, below=3.0 of x16](x12n-1){$x_{2,2n-1}$};
\node[latent, below=1.0 of x12n-1,scale=1.4](x12n){$x_{2,2n}$};
\edge{x20}{x11};
\edge{x11}{x12};
\edge{x12}{x13};
\edge{x13}{x14};
\edge{x14}{x15};
\edge{x15}{x16};
\draw[dotted,->] (x16) -- (x12n-1);
\edge{x12n-1}{x12n};

\node[latent, right=2.7 of x13](x21){$x_{3,1}$};
\node[latent, below=0.7 of x21](x22){$x_{3,2}$};
\node[latent, below=0.7 of x22](x23){$x_{3,3}$};
\node[latent, below=0.7 of x23](x24){$x_{3,4}$};
\node[latent, below=3.0 of x24](x22n-3){$x_{3,2n-3}$};
\node[latent, below=1.0 of x22n-3](x22n-2){$x_{3,2n-2}$};
\edge{x30}{x21};
\edge{x21}{x22};
\edge{x22}{x23};
\edge{x23}{x24};
\draw[dotted,->] (x24) -- (x22n-3);
\edge{x22n-3}{x22n-2};

\node[latent, right=3.2 of x23](x31){$x_{4,1}$};
\node[latent, below=0.7 of x31](x32){$x_{4,2}$};
\node[latent, below=3.0 of x32](x32n-5){$x_{4,2n-5}$};
\node[latent, below=1.0 of x32n-5](x32n-4){$x_{4,2n-4}$};
\edge{x40}{x31};
\edge{x31}{x32};
\draw[dotted,->] (x32) -- (x32n-5);
\edge{x32n-5}{x32n-4};

\node[latent, right= 5.4 of x32n-5,scale=1.5](xn3){$x_{n,3}$};
\node[latent, below=1.0 of xn3,scale=1.5](xn4){$x_{n,4}$};
\node[latent, above=0.5 of xn3](xn2){$x_{n,2}$};
\node[latent, above=0.5 of xn2](xn1){$x_{n,1}$};
\edge{xn0}{xn1}
\edge{xn1}{xn2};
\edge{xn2}{xn3};
\edge{xn3}{xn4};

\node[above left= 0.5 and 0.25 of x02n](a){};
\node[below right= 0.5 and 0.25 of xn4](b){};
\draw[thick,dotted] (a) rectangle (b);
\end{tikzpicture}
  \caption{Threshold counter \nckt{}}
  \label{fig:threshold-counter}
\end{figure}

\textbf{Construction of the threshold counter \nckt{} counting 0 to $2^n-1$ (see
  \cref{fig:threshold-counter})} The counter gadget has an input neuron $I$, and
$n$ variable neurons $x_{1,0}, x_{2,0}, \ldots , x_{n,0}$. For
$2\leq i \leq n-1 $, we further introduce neurons $y_i', y_i, a_i$ and $b_i$.
Finally, we introduce two sets of auxiliary neurons: we first have the neurons
$x_{1,0}'$, $x_{2,0}'$, $x_{1,1}$, $x_{1,2}, \ldots , x_{1,2n}$, and also the
auxiliary neurons $x_{i,1}, x_{i,2},\ldots , x_{i,2n+4-2i}$, for each integer
$i$ satisfying $2\leq i \leq n$.

The connections in the circuit depends on the stimulation condition of neurons. If  stimulation of a neuron $v$ at any time $t>0$ depends on the stimulation state of neuron $u$ at time $t-1$, then there is an edge $(u,v)$.

Now, we describe the initial condition for the counter \nckt{}. Thereafter, we see the stimulation conditions of neurons, upon satisfaction of which neurons in the  counter \nckt{} stimulate at any time $t>0$.

\textbf{Initial condition:} At time $t = 0$, the neuron $I$ is stimulated (i.e.,
$I(0) = 1$) and all other neurons in the circuit are set to be non-stimulated
(i.e., their states are set to $0$).

\textbf{Stimulation conditions:} The neuron $I$ stimulates only at time $t = 0$:
we have $I(0) = 1$ and $I(t) = 0$ for all $t \geq 1$.  The neuron $x_{1,0}$
stimulates at time $t$ if either $I$ or $x_{1,0}'$ is stimulated at time $t-1$.
Formally,
\begin{equation}
  x_{1,0}\leftarrow I \lor {x_{1,0}'}.\label{eq:th-counter-1}
\end{equation}
$x_{1,0}'$ stimulates at time $t$ if  $x_{1,0}$ is stimulated at time $t-1$:
\begin{equation}
  x_{1,0}'\leftarrow x_{1,0}.\label{eq:th-counter-1'}
\end{equation}
$x_{2,0}'$ stimulates at time $t$ if either $I$ is stimulated or $x_{2,0}$ is
not stimulated at time $t-1$:
\begin{equation}
  x_{2,0}'\leftarrow I \lor \overline{x_{2,0}}.\label{eq:th-counter-2'}
\end{equation}
$x_{2,0}$ stimulates at time $t$ if  $x_{2,0}'$ is stimulated at time $t-1$:
\begin{equation}
  x_{2,0}\leftarrow x_{2,0}'.\label{eq:th-counter-2}
\end{equation}
$y_{2}'$ stimulates at time $t$ if  $x_{2,0}$ is stimulated at time $t-1$:
\begin{equation}
  y_{2}'\leftarrow x_{2,0}.\label{eq:th-counter-y2'}
\end{equation}
For $2< i < n$, $y_i'$ stimulates at time $t$ if both $x_{i,0}$ and $y_{i-1}$
are stimulated at time $t-1$:
\begin{equation}
    y_i' \leftarrow x_{i,0} \land y_{i-1}. \label{eq:th-counter-yi'}
\end{equation}
For $1< i < n$, $y_i$ stimulates at time $t$ if $y_i'$ is stimulated at time
$t-1$, $a_i$ stimulates at time $t$ if $y_i$ is stimulated and $x_{i+1,0}$ is
not stimulated at time $t-1$, and $b_i$ stimulates at time $t$ if $y_i$ is not
stimulated and $x_{i+1,0}$ is stimulated at time $t-1$:
\begin{align}
  y_i  & \leftarrow y_i',\label{eq:th-counter-yi}\\
  a_i  & \leftarrow y_i \land \overline{x_{i+1,0}}, \text{ and}\label{eq:th-counter-ai}\\
  b_i  & \leftarrow \overline{y_i} \land x_{i+1,0}.\label{eq:th-counter-bi}
\end{align}
For $2 < i \leq n$, $x_{i, 0}$ stimulates at time $t$ if at time $t-1$ either $a_{i-1}$ or $b_{i-1}$ is stimulated:
\begin{equation}
  x_{i, 0}\leftarrow a_{i-1}\lor b_{i-1}.\label{eq:th-counter-xi0}
\end{equation}
For $1 \leq j \leq 2n$, $x_{1,j}$ stimulates at time
$t$ if $x_{1,j-1}$ is stimulated at time $t-1$:
\begin{equation}
  x_{1,j} \leftarrow x_{1,j-1}.\label{eq:th-counter-x1j}
\end{equation}
Finally, for $2 \leq i \leq n$ and $1 \leq j \leq 2n+4-2i$, $x_{i,j}$ stimulates at time
$t$ if $x_{i,j-1}$ is stimulated at time $t-1$:
\begin{equation}
  x_{i,j} \leftarrow x_{i,j-1}.\label{eq:th-counter-xij}
\end{equation}

We start with the following simple observation.
\begin{observation}
  \label{lem: stimulation-of-x_1}
  The neuron $x_1$ stimulates for the first time at time 1, and thereafter, it
  changes its state at every time step. Formally, we have
  \begin{equation}
      x_{1,0}(t)=
      \begin{cases}
      0 & \text{when $t\geq 0$ is even},\\
      1 & \text{when $t\geq 1$ is odd}.
      \end{cases}
  \end{equation}

\end{observation}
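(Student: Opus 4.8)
The plan is a direct induction on $t$, using only the two update rules \eqref{eq:th-counter-1} and \eqref{eq:th-counter-1'} that govern $x_{1,0}$ and its companion neuron $x_{1,0}'$, together with the stipulated behavior of the input neuron (namely $I(0)=1$ and $I(t)=0$ for all $t\geq 1$) and the fact that every neuron other than $I$ is non-stimulated at time $t=0$.

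First I would dispatch the base cases. The initial condition gives $x_{1,0}(0)=0$ and $x_{1,0}'(0)=0$; evaluating \eqref{eq:th-counter-1} at $t=1$ then yields $x_{1,0}(1)=I(0)\lor x_{1,0}'(0)=1$, and \eqref{eq:th-counter-1'} gives $x_{1,0}'(1)=x_{1,0}(0)=0$. In particular $x_{1,0}$ is non-stimulated at $t=0$ and stimulated at $t=1$, so it stimulates for the first time at time $1$.

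Next I would extract the recurrence $x_{1,0}(t)=x_{1,0}(t-2)$, valid for every $t\geq 2$: since $I(t-1)=0$ for $t\geq 2$, rule \eqref{eq:th-counter-1} collapses to $x_{1,0}(t)=x_{1,0}'(t-1)$, and since $t-1\geq 1$, rule \eqref{eq:th-counter-1'} gives $x_{1,0}'(t-1)=x_{1,0}(t-2)$. Feeding the two base values $x_{1,0}(0)=0$ and $x_{1,0}(1)=1$ into this recurrence and splitting on the parity of $t$ gives $x_{1,0}(t)=0$ for even $t\geq 0$ and $x_{1,0}(t)=1$ for odd $t\geq 1$, which is exactly the claimed formula.

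There is no real obstacle here; the only point requiring a little care is that the recurrence $x_{1,0}(t)=x_{1,0}(t-2)$ holds only once $t\geq 2$ (so that both $I(t-1)=0$ and $x_{1,0}'(t-1)$ is itself already determined by \eqref{eq:th-counter-1'}), which is precisely why the two base cases $t=0$ and $t=1$ must be established separately before invoking the recurrence.
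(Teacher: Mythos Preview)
Your proposal is correct and follows essentially the same approach as the paper: establish the base values $x_{1,0}(0)=0$ and $x_{1,0}(1)=1$ from the initial conditions and \eqref{eq:th-counter-1}, derive the recurrence $x_{1,0}(t)=x_{1,0}(t-2)$ for $t\geq 2$ by composing \eqref{eq:th-counter-1} (with $I(t-1)=0$) and \eqref{eq:th-counter-1'}, and conclude by parity. Your write-up is slightly more explicit about why the recurrence only kicks in at $t\geq 2$, but the argument is the same.
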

\begin{proof}
  From \cref{eq:th-counter-1}, $x_{1,0}$ stimulates at time $t=1$, as the
  initial conditions give $I(0) = 1$ and $x_{1,0}'(0) = 0$.  Thereafter, we have
  $I(t) = 0$ for all $t \geq 1$, i.e., at any time $t>1$,
  $x_{1,0}(t)=x_{1,0}'(t-1)$. From \cref{eq:th-counter-1'}, at any time $t'>0$,
  $x_{1,0}'(t')=x_{1,0}(t'-1)$. These imply that at any time $t>1$,
  $x_{1,0}(t)=x_{1,0}(t-2)$. We say above that $x_{1, 0}(1) = 1$, and also know
  that $x_{1,0}(0)=0$ (due to the initial conditions). Thus, for any even time $t$,
  $x_{1,0}(t)=0$ and for any odd time $t$, $x_{1,0}(t)=1$.
\end{proof}

The following lemma builds upon the above observation to fully characterize the
stimulation profile of the key neurons in the gadget.

\begin{lemma}
  \label{lem:stimulation-of-x_i0-and-y_i}
  For $2 \leq i \leq n$, the neuron $x_{i, 0}$ is stimulated for the first time at time
  $t = 2^{i-1} + 2i-4$, and thereafter, changes its state after every $2^{i-1}$
  time-steps.  Formally, we have
  \begin{equation}
    x_{i, 0}(t) =
    \begin{cases}
      0 & 0 \leq t \leq 2^{i-1} + 2i - 5,\\
      1 & l\cdot 2^{i-1} + 2i - 4 \leq t \leq (l + 1)\cdot 2^{i-1} + 2i - 5;\; l \geq
      1\text{ odd,}\\
      0 & l\cdot 2^{i-1} + 2i - 4 \leq t \leq (l + 1)\cdot 2^{i-1} + 2i - 5;\; l \geq
      2 \text{ even.}
    \end{cases}\label{eq:th-counter-x_i0}
  \end{equation}
  Similarly, for $2 \leq i < n$, the neuron $y_i$ is stimulated for the first
  time at time $2^{i} + 2i - 4$, remains stimulated at the next time step
  i.e. at time $2^{i} + 2i - 3$ , and thereafter changes its state. It then
  repeats this behavior with a period of $2^{i}$. Formally, we have
  \begin{equation}
    y_i(t) \leftarrow
    \begin{cases}
      1 & l\cdot 2^{i} + 2i - 4 \leq t \leq l \cdot 2^{i} + 2i - 3 \text{ for some integer } l \geq 1,\\
      0 & \text{otherwise.}
    \end{cases}\label{eq:th-counter-y}
  \end{equation}
\end{lemma}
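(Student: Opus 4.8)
The plan is to prove the two statements of the lemma simultaneously, by induction on $i$, after first boiling the update rules down to a few recurrences for the ``important'' neurons. Composing \cref{eq:th-counter-ai,eq:th-counter-bi,eq:th-counter-xi0} and using that $a_{i-1}$ and $b_{i-1}$ are never both stimulated, one gets, for $3\le i\le n$ and every $t\ge 2$,
\[
  x_{i,0}(t)\;=\;y_{i-1}(t-2)\oplus x_{i,0}(t-2),
\]
with $x_{i,0}(0)=x_{i,0}(1)=0$ forced by the initial condition (at $t=0$ only $I$ fires, so $a_{i-1}(0)=b_{i-1}(0)=0$ and hence $x_{i,0}(1)=0$). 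In the same way \cref{eq:th-counter-yi} together with the update rule for $y_i'$ gives $y_i(t)=x_{i,0}(t-2)\land y_{i-1}(t-2)$ for $3\le i<n$ and $t\ge 2$, while the $i=2$ rules collapse to $x_{2,0}(t)=\overline{x_{2,0}(t-2)}$ for $t\ge 3$ (with $x_{2,0}(0)=x_{2,0}(1)=0$ and $x_{2,0}(2)=1$, again from the initial condition) and $y_2(t)=x_{2,0}(t-2)$. Intuitively: $x_{2,0}$ is a pure period-$4$ toggle, each higher bit $x_{i,0}$ is a one-bit memory cell that XOR-accumulates the pulses of $y_{i-1}$ from two steps earlier, and $y_i$ fires exactly when $x_{i,0}$ and $y_{i-1}$ are on together; the lemma just records that this really does behave like a binary counter.

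For the base case $i=2$, the recurrence $x_{2,0}(t)=\overline{x_{2,0}(t-2)}$ with $x_{2,0}(0)=x_{2,0}(1)=0$ and $x_{2,0}(2)=1$ produces the pattern $0,0,1,1,0,0,1,1,\dots$, which is exactly \cref{eq:th-counter-x_i0} for $i=2$ (there $2^{i-1}=2$ and $2i-4=0$); passing it through $y_2(t)=x_{2,0}(t-2)$ shifts it by two steps and yields \cref{eq:th-counter-y} for $i=2$.

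For the inductive step, assume the claim at index $i$ for some $2\le i\le n-1$; in particular $y_i$ is stimulated exactly on the consecutive pairs $\{\,l\cdot 2^{i}+2i-4,\ l\cdot 2^{i}+2i-3\,\}$, $l\ge 1$. Unrolling $x_{i+1,0}(t)=y_i(t-2)\oplus x_{i+1,0}(t-2)$ down to the base values, $x_{i+1,0}(t)$ equals the parity of the number of stimulation times of $y_i$ in $\{0,1,\dots,t-2\}$ having the same parity as $t$. Because $2^{i}$ is even, within each parity class these times are spaced $2^{i}$ apart, and since the two members of a $y_i$-pulse differ by one the running counts in the two classes advance in step; hence each pulse of $y_i$ toggles $x_{i+1,0}$ as a sequence, $x_{i+1,0}$ is constant between toggles, and computing the first toggle time and the period and matching them against \cref{eq:th-counter-x_i0} establishes that formula for $i+1$. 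Finally, when $i+1<n$, we combine this description of $x_{i+1,0}$ with that of $y_i$ through $y_{i+1}(t)=x_{i+1,0}(t-2)\land y_i(t-2)$: a short inequality in the offsets shows that the $l$-th pulse of $y_i$ lies inside a stimulated block of $x_{i+1,0}$ exactly when $l$ is even, and writing $l=2m$ (so that consecutive surviving overlaps are $2^{i+1}$ apart) and accounting for the conduction delay of $2$ gives precisely \cref{eq:th-counter-y} for $i+1$.

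I expect the only real work to be bookkeeping. One must keep straight the several affine shifts that appear — $2i-4$ for the onset of $x_{i,0}$, the endpoints $2i-4$ and $2i-3$ of a $y_i$-pulse, and the shifts $2i-6$, $2i-5$ obtained when these are re-expressed relative to $y_{i-1}$ — check that the parities line up so that a length-two $y_{i-1}$-pulse really toggles all of $x_{i,0}$ rather than one parity class of it, verify the ``pulse $l$ overlaps an $x_{i,0}$-block iff $l$ is even'' inequality, and handle the warm-up segment where the running count is still $0$. None of these steps is deep; the care is entirely in matching every boundary case against \cref{eq:th-counter-x_i0} and \cref{eq:th-counter-y}.
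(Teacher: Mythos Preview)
Your proposal is correct and follows essentially the same approach as the paper: both derive the same recurrences $x_{i,0}(t)=y_{i-1}(t-2)\oplus x_{i,0}(t-2)$ and $y_i(t)=x_{i,0}(t-2)\land y_{i-1}(t-2)$, handle the base case $i=2$ identically, and close the induction by the same intersection-of-intervals computation for $y_i$. The only cosmetic difference is that where you unroll the XOR into a parity-count of $y_{i-1}$-pulses in a parity class of times, the paper instead packages the same observation as a short subclaim (their ``Consequence'') stating that once $x_{k,0}$ agrees on two consecutive times at a block boundary it stays at the flipped value for the next full block; both arguments encode the same toggle-on-each-pulse picture.
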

\begin{proof}
  We prove the lemma by induction on the value of $i$. We first prove the base case for the first part of lemma (\cref{eq:th-counter-x_i0}) and then for the second part of lemma (\cref{eq:th-counter-y}).

  \textbf{Base Case for \cref{eq:th-counter-x_i0} $(i=2)$:} From
  \cref{eq:th-counter-2'}, $x_{2,0}'$ stimulates for the first time at time
  $t = 1$, as the initial conditions give $I(0) = 1$ and $x_{2,0}(0) =
  0$. Thereafter, we have $I(t) = 0$ for all $t \geq 1$, i.e. at any time
  $t'\geq 2$, $x_{2,0}'(t')=1-x_{2,0}(t'-1)$. From \cref{eq:th-counter-2}, for
  $t>0$, $x_{2,0}(t)= x_{2,0}'(t-1)$. This implies that $x_{2,0}$ stimulates for
  the first time at time $2$. This proves the first item of
  \cref{eq:th-counter-x_i0} for the base case.  For $t > 2$,
  $x_{2,0}(t)=x_{2,0}'(t-1)=1-x_{2,0}(t-2)$. Since $x_{2,0}(1) = 0$ and
  $x_{2, 0}(2) = 1$, it therefore follows that $x_{2,0}(t)=1$ when
  $2k\leq t \leq 2k+1$ for odd $k\geq 1$, and $x_{2,0}(t)=0$ when
  $2k\leq t \leq 2k+1$ for even $k\geq 2$. This proves the second and third item
  of \cref{eq:th-counter-x_i0} for the base case.

  \textbf{Base Case for \cref{eq:th-counter-y} $(i=2)$:} From the initial
  conditions, we have $y_2(0) = y_2'(0) =0$. Using \cref{eq:th-counter-yi}, we
  therefore get $y_2(1)=0$. From \cref{eq:th-counter-yi,eq:th-counter-y2'}, we
  have, for $t\geq 2$, $y_2(t)=x_{2,0}(t-2)$. From the above base case analysis
  of \cref{eq:th-counter-x_i0}, $x_{2,0}(t-2)=1$ if and only if
  $2k \leq t-2 \leq 2k+1$ for odd $k\geq 1$, i.e., if
  $4\ell \leq t \leq 4\ell+1$ for some integer $\ell \geq 1$ (here, we put
  $k=2\ell-1$). This proves the base case for \cref{eq:th-counter-y}.

  We now proceed with the induction. Our induction hypothesis is that
  \cref{eq:th-counter-x_i0,eq:th-counter-y} are true for all
  $2 \leq i \leq k - 1$, for some $3 \leq k \leq n$.  We prove inductively then
  that they are true also for $i = k$. We start with \cref{eq:th-counter-x_i0}.

  From \cref{eq:th-counter-xi0} and the initial conditions, we get that
  $x_{k,0}(0) = x_{k,0}(1) = 0$. For any time $t\geq 2$, we have, from
  \cref{eq:th-counter-xi0,eq:th-counter-bi,eq:th-counter-ai}:
  \begin{align*}
    x_{k,0}(t) & = a_{k-1}(t-1) \lor b_{k-1}(t-1)\\
               &= (y_{k-1}(t-2) \land \overline{x_{k,0}}(t-2))
                 \lor (\overline{y_{k-1}}(t-2) \land x_{k,0}(t-2)) \\
               &= y_{k-1}(t-2) \oplus x_{k,0}(t-2) \tag{17} \label{eq:th-counter-xk0-t}
  \end{align*}
  To complete the proof, we first record the following consequence of the
  induction hypothesis.
  \begin{consequence}
    If $x_{k,0}(l\cdot 2^{k-1} + 2k -6)=$ $x_{k,0}(l\cdot 2^{k-1} + 2k -5)=a$
    for some integer $l\geq 1$ and $a\in \{0,1\}$ then $x_{k,0}(t)=1-a$,
    whenever $t$ satisfies
    $l\cdot 2^{k-1} + 2k -4 \leq t \leq (l+1)\cdot 2^{k-1} + 2k -5$.
    \label{cor:th-counter-xk0}
  \end{consequence}
  \begin{proof}
    From the induction hypothesis, we have
    \begin{equation}
      y_{k-1}(t)  = 1 \text{ if and only if } s\cdot 2^{k-1} + 2k -6\leq t\leq s\cdot 2^{k-1} +2k -5 \text{ for some integer $s\geq 1$ }.\label{eq:th-counter-consequence}
    \end{equation}
    If $x_{k,0}(l\cdot 2^{k-1} + 2k -6)=$ $x_{k,0}(l\cdot 2^{k-1} + 2k -5)=a$
    for some integer $l\geq 1$ then using
    \cref{eq:th-counter-xk0-t,eq:th-counter-consequence}, we get
    $x_{k,0}(l\cdot 2^{k-1} + 2k -4)=$ $x_{k,0}(l\cdot 2^{k-1} + 2k
    -3)=1-a$. Since $y_{k-1}(t)=0$ when
    $l\cdot 2^{k-1} + 2k -4 \leq t \leq (l+1)\cdot 2^{k-1} + 2k -7$ (from
    \cref{eq:th-counter-consequence}), we see from \cref{eq:th-counter-xk0-t},
    that $x_{k,0}$ remains in state $1-a$ till time
    $(l+1)\cdot 2^{k-1} + 2k -5$.
  \end{proof}
  Now, since $y_{k-1}$ stimulates for the first time at time $2^{k-1}+2k-6$, we
  see from \cref{eq:th-counter-xk0-t} that $x_{k,0}$ stimulates for the first
  time at time $2^{k-1}+2k-4$, i.e.,
  $x_{k,0}(2^{k-1}+2k-6)=x_{k,0}(2^{k-1}+2k-5)=0$. Now, iteratively using
  \cref{cor:th-counter-xk0} proves \cref{eq:th-counter-x_i0} for $i=k$.

  We now prove \cref{eq:th-counter-y} for $i=k$ for $3\leq k < n$. From the
  initial conditions and \cref{eq:th-counter-yi}, we have $y_k(0)=y_k(1)=0$. For
  $t\geq 2$, from \cref{eq:th-counter-yi,eq:th-counter-yi'}:
  \begin{equation}
    y_k(t) = y_k'(t-1) = y_{k-1}(t-2) \land x_{k,0}(t-2). \label{eq:th-y-update}
  \end{equation}
  Now, from the induction hypothesis,we have
  \begin{equation}
    y_{k-1}(t-2)
    = 1 \text{ if and only if } l\cdot 2^{k-1} + 2k -4\leq t\leq
    l\cdot 2^{k-1} +2k -3 \text{ for some integer $l\geq 1$}.\label{eq:th-y-induction}
  \end{equation}
  Further, since we have already established that the induction hypothesis
  implies \cref{eq:th-counter-x_i0} for $i = k$, we also have (assuming the
  induction hypothesis) that
  \begin{equation}
    x_{k,0}(t-2) = 1 \text{ if and only if } l'\cdot 2^{k-1} + 2k -2\leq t\leq
    (l'+1)\cdot 2^{k-1} +2k -3 \text{ for some odd $l'\geq 1$}. \label{eq:th-x-induction}
  \end{equation}
  Thus, for $t \geq 2$, $y_i(t) = 1$ if and only if $t$ satisfies the conditions
  in both \cref{eq:th-y-induction,eq:th-x-induction}.  A direct calculation
  (using $k \geq 3)$ then shows that this forces $l = l' + 1$.  We then get that
\begin{align*}
    y_k(t) & = 1 \text{ if and only if } (l'+1)\cdot 2^{k-1} + 2k -4\leq t\leq
             (l'+1)\cdot 2^{k-1} +2k -3 \text{ for some odd $l'\geq 1$}.
\end{align*}
The last condition can be written as
``$\alpha \cdot 2^{k} + 2k -4\leq t\leq \alpha \cdot 2^{k} +2k -3$, where
$\alpha=(l'+1)/2 \geq 1$ is an integer.''  This completes the induction.
\end{proof}

We now record three corollaries of the above lemma.
\begin{corollary}
  The neuron $x_{1,2n}$ stimulates for the first time at time $2n+1$, and thereafter changes its state at every time step. Formally,
  \begin{displaymath}
  x_{1,2n}(t)=
  \begin{cases}
      0 & 0 \leq t \leq 2n ,\\
      1 & t = k + 2n;\; k \geq
      1\text{ odd,}\\
      0 & t = k + 2n;\; k \geq
      2 \text{ even.}
    \end{cases}
  \end{displaymath}
  \label{cor:th-counter-x1-2n}
\end{corollary}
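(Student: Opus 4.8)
The plan is to reduce the corollary directly to Observation~\ref{lem: stimulation-of-x_1} by following the chain of pure-delay neurons $x_{1,0} \to x_{1,1} \to \cdots \to x_{1,2n}$ defined by \cref{eq:th-counter-x1j}. First I would establish, by induction on $j$ with $0 \le j \le 2n$, the identity $x_{1,j}(t) = 0$ for all $0 \le t < j$ and $x_{1,j}(t) = x_{1,0}(t-j)$ for all $t \ge j$. The case $j = 0$ is immediate. For the inductive step, \cref{eq:th-counter-x1j} gives $x_{1,j}(t) = x_{1,j-1}(t-1)$ for every $t \ge 1$, while the initial condition gives $x_{1,j}(0) = 0$; combining these with the inductive hypothesis applied to $j-1$ yields the desired identity for $j$ (the warm-up window $0 \le t < j$ comes from $x_{1,j}(0)=0$ together with $x_{1,j-1}(t-1)=0$ for $1 \le t < j$).

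Taking $j = 2n$ then gives $x_{1,2n}(t) = 0$ for $0 \le t \le 2n-1$ and $x_{1,2n}(t) = x_{1,0}(t-2n)$ for $t \ge 2n$. Now I would substitute the explicit description of $x_{1,0}$ from Observation~\ref{lem: stimulation-of-x_1}: $x_{1,0}(s) = 1$ precisely when $s \ge 1$ is odd, and $x_{1,0}(s) = 0$ precisely when $s \ge 0$ is even. Writing $t = k + 2n$, this says $x_{1,2n}(t) = 1$ iff $k \ge 1$ is odd and $x_{1,2n}(t) = 0$ iff $k \ge 0$ is even. The case $k = 0$ (i.e.\ $t = 2n$) gives state $0$ and agrees with the warm-up window, so all $t \le 2n$ are accounted for; the smallest $t$ with $x_{1,2n}(t) = 1$ is $t = 2n+1$, matching the claimed first stimulation time. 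This is exactly the three-case formula in the statement.

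There is no real obstacle here: the corollary is a routine consequence of Observation~\ref{lem: stimulation-of-x_1} and the delay recurrence \cref{eq:th-counter-x1j}. The only point that needs a little care is the bookkeeping of the $2n$-step shift and checking that the two sources of the value $0$ for small $t$ — the chain not yet being ``filled'' for $t < 2n$, and the even case $k=0$ of the observation at $t = 2n$ — are consistent; they are, so the argument is a single easy induction followed by a substitution.
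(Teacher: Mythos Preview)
Your proposal is correct and follows essentially the same approach as the paper: the paper also deduces $x_{1,2n}(t)=x_{1,0}(t-2n)$ for $t\ge 2n$ and $x_{1,2n}(t)=0$ for $0\le t\le 2n$ directly from \cref{eq:th-counter-x1j} and the initial conditions, and then invokes \cref{lem: stimulation-of-x_1}. Your explicit induction on $j$ simply spells out what the paper compresses into one sentence.
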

\begin{proof}
  From \cref{eq:th-counter-x1j}, we have $x_{1,2n}(t) = x_{1,0}(t-2n)$, for all $t\geq 2n$, and from the same equation and initial conditions, we have $x_{1,2n}(t)=0$ for $0\leq t \leq 2n$. Together with \cref{lem: stimulation-of-x_1}, this implies the claim of the corollary.
\end{proof}

\begin{corollary}
  \label{cor:counter-bit-values}
  For each $2 \leq i \leq n$, the neuron $x_{i,2n+4-2i}$ stimulates for the first
  time at time $2n+2^{i-1}$, and thereafter changes its state periodically with
  time period $2^{i-1}$.  Formally,
  \begin{displaymath}
    x_{i, 2n + 4 -2i}(t) =
    \begin{cases}
      0 & 0 \leq t \leq 2^{i-1} + 2n - 1,\\
      1 & l\cdot 2^{i-1} + 2n \leq t \leq (l + 1)\cdot 2^{i-1} + 2n - 1;\; l \geq
      1\text{ odd,}\\
      0 & l\cdot 2^{i-1} + 2n \leq t \leq (l + 1)\cdot 2^{i-1} + 2n - 1;\; l \geq
      2 \text{ even.}
    \end{cases}
  \end{displaymath}
\end{corollary}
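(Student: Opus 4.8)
The plan is to reduce the claim to \Cref{lem:stimulation-of-x_i0-and-y_i} by observing that the chain of neurons $x_{i,0} \to x_{i,1} \to \cdots \to x_{i,2n+4-2i}$ acts as a pure delay line, exactly as in the proof of \Cref{cor:th-counter-x1-2n}. Fix $i$ with $2 \le i \le n$ and write $d \defeq 2n+4-2i$ for the length of this chain.

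First I would show, by a routine induction on $j$ using \cref{eq:th-counter-xij}, that for every $1 \le j \le d$ we have $x_{i,j}(t) = x_{i,0}(t-j)$ for all $t \ge j$, and $x_{i,j}(t) = 0$ for $0 \le t < j$ (the latter holding because a neuron at distance $j$ from $x_{i,0}$ along the chain cannot become stimulated before time $j$ under the given initial conditions). Taking $j = d$ gives $x_{i,d}(t) = x_{i,0}(t-d)$ for $t \ge d$, and $x_{i,d}(t) = 0$ for $0 \le t < d$.

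Next I would substitute the closed form for $x_{i,0}$ from \Cref{lem:stimulation-of-x_i0-and-y_i}. Since $x_{i,0}$ first stimulates at time $2^{i-1}+2i-4$, the neuron $x_{i,d}$ first stimulates at time $2^{i-1}+2i-4+d = 2^{i-1}+2n$; and because $x_{i,0}(s) = 0$ for $0 \le s \le 2^{i-1}+2i-5$, combining with the $0 \le t < d$ case gives $x_{i,d}(t) = 0$ for all $0 \le t \le 2^{i-1}+2n-1$. For the periodic part, \cref{eq:th-counter-x_i0} says $x_{i,0}(s) = 1$ exactly when $l\cdot 2^{i-1}+2i-4 \le s \le (l+1)\cdot 2^{i-1}+2i-5$ for odd $l \ge 1$, and $x_{i,0}(s) = 0$ on the analogous ranges for even $l \ge 2$; putting $s = t-d$ and using $2i-4+d = 2n$ and $2i-5+d = 2n-1$ turns these into precisely the ranges $l\cdot 2^{i-1}+2n \le t \le (l+1)\cdot 2^{i-1}+2n-1$ appearing in the corollary.

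There is no real obstacle here: the entire content is the bookkeeping of the shift by $d = 2n+4-2i$, and the argument is a verbatim adaptation of the proof of \Cref{cor:th-counter-x1-2n} with $x_{1,0}$ replaced by $x_{i,0}$ and the delay length $2n$ replaced by $2n+4-2i$.
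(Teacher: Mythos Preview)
Your proposal is correct and follows essentially the same approach as the paper: both arguments observe that the chain $x_{i,0}\to\cdots\to x_{i,2n+4-2i}$ is a pure delay line (so $x_{i,2n+4-2i}(t)=x_{i,0}(t-(2n+4-2i))$ for $t\ge 2n+4-2i$, and $0$ before that) and then substitute the closed form for $x_{i,0}$ from \Cref{lem:stimulation-of-x_i0-and-y_i}. You spell out the induction on $j$ a bit more explicitly than the paper does, but the content is identical.
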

\begin{proof}
  From \cref{eq:th-counter-xij}, for $2\leq i\leq n$ we have
  $x_{i, 2n + 4 - 2i}(t) = x_{i, 0}(t - 2n -4+2i)$, for all $t \geq 2n + 4 - 2i $, and
  from the same equations and the initial conditions, we have that
  $x_{i, 2n + 4 - 2i}(t) = 0$ for $0 \leq t \leq 2n + 3 - 2i$.  Together with
  \cref{eq:th-counter-x_i0} of \cref{lem:stimulation-of-x_i0-and-y_i}, this implies the
  claim of the corollary.
\end{proof}

\begin{corollary}
  \label{cor:th-stimulation-of-x_i_(n+1-i)}
  At any time $t \geq 2n$, the string $x_{n,4}x_{n-1,6}\dots x_{2,2n}x_{1,2n}$,
  when interpreted as a binary integer, is equal to the remainder obtained when
  dividing $t-2n$ by $2^n$ (i.e., $x_{1,2n}$ is the least significant bit and
  $x_{i,2n+4-2i}$ is the $i$th least significant bit in the binary
  representation of $t-2n$, for all $2 \leq i \leq n$). For any time $t < 2n$,
  the same string is a string of zeros.  In particular, for any string
  $\sigma \in \inb{0,1}^n$, there is a unique time $t$,
  $2n \leq t \leq 2n + 2^{n}-1$, such that at time $t$,
  $\sigma_1 = x_{1, 2n}(t)$ and $\sigma_i = x_{i,2n+4-2i}(t)$ for all
  $2 \leq i \leq n$.
\end{corollary}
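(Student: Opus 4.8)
The plan is to read this off directly from \cref{cor:th-counter-x1-2n,cor:counter-bit-values}, which already pin down the exact stimulation profiles of the $n$ designated neurons, and simply match those profiles against the bit patterns produced by a binary counter. The one fact about binary representations I will use is that for a nonnegative integer $m$ and $1 \le i \le n$, the $i$th least significant bit of $m$ equals $1$ exactly when $\floor{m/2^{i-1}}$ is odd, and that this bit depends only on $m \bmod 2^i$, hence in particular only on $m \bmod 2^n$.

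First I would dispose of the range $0 \le t < 2n$: there \cref{cor:th-counter-x1-2n} gives $x_{1,2n}(t) = 0$, and for $2 \le i \le n$ we have $2^{i-1} + 2n - 1 \ge 2n+1 > t$, so \cref{cor:counter-bit-values} gives $x_{i,2n+4-2i}(t) = 0$ as well; thus the string is all zeros. For $t \ge 2n$ set $m \defeq t - 2n \ge 0$. In the least significant position, \cref{cor:th-counter-x1-2n} says $x_{1,2n}(t) = 1$ iff $t - 2n$ is odd, i.e.\ iff the last bit of $m$ is $1$. For $2 \le i \le n$, \cref{cor:counter-bit-values} says $x_{i,2n+4-2i}(t) = 1$ iff $l\cdot 2^{i-1} \le m \le (l+1)\cdot 2^{i-1} - 1$ for some odd $l \ge 1$, which is precisely the assertion that $\floor{m/2^{i-1}}$ is odd, i.e.\ that the $i$th least significant bit of $m$ is $1$. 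Hence the string $x_{n,4}x_{n-1,6}\cdots x_{2,2n}x_{1,2n}$, read at time $t$, is the $n$-bit binary representation of $m \bmod 2^n = (t-2n) \bmod 2^n$, which is the first claim.

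Finally, the ``in particular'' clause follows by composing two bijections: $t \mapsto t - 2n$ maps $\inb{2n, 2n+1, \dots, 2n + 2^n - 1}$ bijectively onto $\inb{0, 1, \dots, 2^n - 1}$, and $n$-bit binary representation is a bijection between $\inb{0,1}^n$ and $\inb{0,1,\dots,2^n-1}$; so for each $\sigma$ there is exactly one such $t$. I do not expect a genuine obstacle here — the only thing to be careful about is the offset bookkeeping (the $2n$ warm-up, the position index $2n+4-2i$, and the phase that makes the first flip occur at an odd count), and all of that has already been carried out inside the two preceding corollaries, so what remains is purely the translation between ``state flips every $2^{i-1}$ steps'' and ``$i$th bit of a counter''.
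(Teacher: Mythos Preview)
Your proposal is correct and follows essentially the same approach as the paper's own proof: both read off the case $t < 2n$ and the case $t \ge 2n$ directly from \cref{cor:th-counter-x1-2n,cor:counter-bit-values}, and then identify the condition ``$l\cdot 2^{i-1} \le t-2n \le (l+1)\cdot 2^{i-1}-1$ for some odd $l$'' with the $i$th least significant bit of $t-2n$ being $1$. Your treatment of the ``in particular'' clause via explicit bijections is slightly more spelled out than the paper's, but the content is the same.
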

\begin{proof}
  From \cref{cor:th-counter-x1-2n,cor:counter-bit-values},
  $x_{1, 2n}(t) = x_{i, 2n + 4 -2i}(t) = 0$ for all $t < 2n$ and
  $2 \leq i \leq n$.  From \cref{cor:counter-bit-values}, we also see that for
  $t' \geq 0$, $x_{i,2n + 4 - 2i}(t' + 2n) = 1$ (for $2\leq i\leq n)$ if and
  only if for some \emph{odd} positive integer $l$,
  $l\cdot 2^{i-1} \leq t' \leq (l+1)\cdot2^{i-1} - 1$. Also, from
  \cref{cor:th-counter-x1-2n}, for $t' \geq 0$, $x_{1,2n}(t' + 2n) = 1$ if and
  only if $t'$ is an \emph{odd} positive integer. Thus, for $t' \geq 0$,
  $x_{1,2n}(t'+2n)=1$ if and only if the least significant bit in the binary
  representation of $t'$ is 1, and for all $2 \leq i \leq n$,
  $x_{i, 2n + 4 - 2i}(t' + 2n) = 1$ if and only if the $i$th least significant
  bit in the binary representation of $t'$ is $1$.  This establishes the claim.
\end{proof}

\subsection{The reduction}
We now prove our main theorem (\Cref{thm:main:intro} in the introduction). We
begin with a more formal restatement.

\begin{theorem}
\label{thm:th-neural-ckt-simulation-is-PSPACE-hard} 
The \nameref{ques:neural-circuit-simulation} problem is PSPACE-hard, even when
restricted to \nckts{} of maximum degree at most $6$, and such that every neuron
computes a threshold function with weights of absolute value at most $2$.
\end{theorem}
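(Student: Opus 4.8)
The plan is to reduce from TQBF with the structural restrictions guaranteed by \Cref{def:tqbf}: a formula $\exists x_n \forall x_{n-1}\cdots\exists x_1\,\phi(x_1,\dots,x_n)$ with $\phi$ a $3$-CNF in which each variable occurs at most $4$ times, $n$ odd, quantifiers strictly alternating, and $\phi(0,\dots,0)=0$. Given such an instance, I would build a threshold \nckt{} that (i) uses the counter gadget of \Cref{sec:counter-nckt} to cycle a designated block of $n$ neurons through every assignment $\sigma\in\{0,1\}^n$ in sequence, spending one unit of time per assignment after the warm-up of $2n$ steps (this is exactly \Cref{cor:th-stimulation-of-x_i_(n+1-i)}); and (ii) attaches, for each assignment presented by the counter, a ``verifier'' subcircuit whose job is to recursively evaluate the quantified formula. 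The output neuron $O$ is wired to fire precisely when the verifier confirms that the QBF is true. Since $\phi$ is $3$-CNF with bounded occurrences, each clause can be computed by an OR of three literals (each literal an extra inverter neuron), and $\phi$ itself by a bounded-fan-in tree of threshold AND/OR neurons as in the \textsc{And}/\textsc{Or} remark; bounded occurrence is what keeps the out-degree of each variable neuron (hence the max degree) bounded by a constant, which one then has to check is $\le 6$.

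The heart of the construction is encoding the quantifier alternation. The natural idea is to exploit the fact that the counter presents assignments in lexicographic order, so that for a fixed prefix $x_n=\epsilon_n,\dots,x_{i+1}=\epsilon_{i+1}$, the $2^i$ assignments extending it appear as a contiguous time-block, and within it the two sub-blocks for $x_i=0$ and $x_i=1$ are themselves contiguous halves. I would introduce, for each quantifier level $i$, an ``accumulator'' neuron $q_i$ that aggregates the truth value of the subformula $Q_{i-1}x_{i-1}\cdots\phi$ over the relevant time window: for an $\exists x_i$ level, $q_i$ should latch to $1$ if the subformula evaluated true for \emph{some} setting of $x_i$ (an OR-over-time, i.e. a self-loop that remembers a $1$ but is reset at the start of each new prefix-block); for a $\forall x_i$ level, $q_i$ should record $1$ only if it evaluated true for \emph{both} settings of $x_i$ (an AND-over-time, which can be done symmetrically by OR-ing a ``failure'' flag and complementing, using the boundary timing signals that the counter's neurons $y_i$, $a_i$, $b_i$ conveniently provide — these already change state exactly at the block boundaries of level $i$). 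The precise alignment of these reset/latch signals with the counter's phase is the bookkeeping that must be done carefully, but the counter lemma gives closed-form firing times for every relevant neuron, so it is a finite calculation.

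The main obstacle I expect is exactly this synchronization: the verifier's evaluation of $\phi$ on a given assignment takes a fixed number $d=O(\log n)$ of time steps to propagate through the AND/OR tree, so the truth value for the assignment presented at time $t$ is available only at time $t+d$; meanwhile the accumulators $q_i$ must be reset and read at times dictated by the counter's own schedule. One fixes this either by padding the counter's bit-lines with extra delay neurons (the gadget already includes chains $x_{i,1},\dots,x_{i,2n+4-2i}$ for precisely this purpose) so that all bits, the clause outputs, and the level-$i$ boundary markers $y_i/a_i/b_i$ are re-timed to a common phase, or by inserting matching delay chains on the boundary signals. The correctness argument then proceeds by induction on the quantifier level, exactly mirroring the recursive semantics of QBF: assuming $q_{i-1}$ correctly holds, at the end of each maximal $x_{\le i-1}$-block, the value of $Q_{i-1}x_{i-1}\cdots\phi$ under the current prefix, one shows $q_i$ correctly holds, at the end of each $x_{\le i}$-block, the value of $Q_i x_i\cdots\phi$; the base case $i=0$ is that the clause-tree neuron correctly computes $\phi$ on the assignment currently displayed by the counter. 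The condition $\phi(0,\dots,0)=0$ and the warm-up period (during which the bit-lines read all zeros, by \Cref{cor:th-stimulation-of-x_i_(n+1-i)}) are used to guarantee the accumulators start in the correct ``no evidence yet'' state and are not spuriously triggered before the counter has stabilized. Finally $O$ is set to fire iff $q_n=1$ after the last block, i.e. at time $2n+2^n-1+O(\log n)$, so the \nckt{} is non-trivial iff the QBF is true; membership in PSPACE is the routine observation that the configuration of a size-$m$ \nckt{} is $m$ bits and can be simulated step-by-step reusing space, with non-triviality detectable within $2^m$ steps.
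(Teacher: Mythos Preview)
Your proposal is correct and follows essentially the same route as the paper: reduce from the restricted TQBF of \Cref{def:tqbf}, use the counter gadget of \Cref{sec:counter-nckt} to sweep through all assignments, evaluate $\phi$ on the currently displayed assignment via a bounded-fan-in threshold tree, and aggregate across time with one accumulator per quantifier level, reset at block boundaries and proved correct by induction on the level. The paper's implementation differs only in low-level wiring --- it detects block boundaries with fresh transition-detector neurons $p_i,q_i$ on delayed copies $z_{i,j}$ of the variable lines (rather than tapping $y_i,a_i,b_i$ directly), uses a linear $d_{i,j}$ cascade of depth $m$ for the conjunction of clauses (rather than an $O(\log m)$ tree), and realizes the $\forall$/$\exists$ aggregation with a slightly asymmetric latch-plus-snapshot pair $s_{i,0},s_{i,0}',s_{i,1}$ instead of two symmetric AND/OR-over-time accumulators --- but the strategy, the inductive invariant, the use of $\phi(0,\dots,0)=0$ to neutralize the warm-up, and the role of the bounded-occurrence hypothesis in securing degree $\le 6$ are all as you describe.
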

\begin{proof} We reduce the TQBF problem to the bounded degree
  \nameref{ques:neural-circuit-simulation} problem. In particular,
  corresponding to any given TQBF instance, we construct a threshold \nckt{}
  (having maximum degree 6) with an input and output neuron such that starting
  with the initial condition in which only the input neuron is stimulated at
  time $t=0$, the output neuron stimulates at some time $t > 0$ if and only if
  the quantified Boolean formula is true.

  Let the given QBF formula be $\exists x_n$ $ \forall x_{n-1}$ $ \ldots$
  $ \exists x_3$ $ \forall x_2$ $ \exists x_1$ $\phi(x_1,x_2,\ldots,x_n)$, where
  $\phi(x_1,x_2,\ldots,x_n)$ is a 3-CNF formula with $m$ clauses. As stated in
  \cref{def:tqbf}, we can assume that each variable $x_i$ occurs at most $4$
  times in $\phi$, and further that the number $n$ of quantifications is odd,
  with all the odd-indexed variables quantified with a $\exists$ quantifier and
  all the even-indexed variables quantified with a $\forall$ quantifier. From
  \Cref{def:tqbf}, we also assume that $\phi$ is false for
  $x_1=x_2=\ldots=x_n=0$. Our goal now is to create (in time polynomial in the
  size of $\phi$) a threshold \nckt{} $C_\phi$ with an input neuron $I$ and an
  output neuron $O$ such that when $C_\phi$ is started in the initial state in
  which only $I$ is stimulated at time $t = 0$, the output neuron $O$ stimulates
  at some future time $t>0$ if and only if $\exists x_n$ $ \forall x_{n-1}$
  $ \ldots$ $ \exists x_3$ $ \forall x_2$ $ \exists x_1$
  $\phi(x_1,x_2,\ldots,x_n)$ is true.

  \textbf{Construction of the threshold \nckt{} corresponding to a QBF (see
    \cref{fig:th-ckt-construction}) :} We first describe the nodes in $C_\phi$.
  Of course, $C_\phi$ contains an input neuron $I$ and an output neuron $O$.
  The first important component of our construction is an $n$-variable counter
  \nckt{} (as discussed in \cref{sec:counter-nckt}).  The input neuron $I$ of
  $C_\phi$ is identified with the input neuron of the counter \nckt{}.  Further,
  the neuron $x_{1,2n}$ in the counter \nckt{} is relabelled as $x_1$ and for
  all $2 \leq i \leq n$, the node $x_{i, 2n +4 - 2i}$ in the counter \nckt{} is
  relabelled as $x_i$.  As we will see later, these relabelled neurons $x_i$,
  $1\leq i \leq n$, will correspond to the variables $x_i$ of the same name
  appearing in $\phi$.

In addition to the nodes in the counter \nckt{}, we have several auxiliary
nodes.  First, for each $1 < i \leq n$, we introduce $m + i - 1$ neurons
$z_{i,1}, \ldots , z_{i,m+i-1}$.  For odd $1\leq i \leq n$, we introduce a
neuron $s_{i,0}$ while for even $1< i < n$, we introduce three neurons $s_{i,0}', s_{i,0}$
and $s_{i,1}$.  For odd $1< i \leq n$, we introduce two neuron $p_i$ and $q_i$.  Further, for each clause $c_i$, where $1 \leq i \leq m$, in
$\phi$, we introduce a clause neuron $c_i$, of the same name.  We also
introduce, for each $1 < i \leq m$, $i-1$ auxiliary clause neurons
$d_{i,1}, \ldots , d_{i,i-1}$.

The connections in the circuit depends on the stimulation condition of
neurons. If stimulation of a neuron $v$ at any time $t>0$ depends on the
stimulation state of neuron $u$ at time $t-1$, then there is an edge $(u,v)$.
Now, we describe the initial condition for the \nckt{} $C_\phi$. Thereafter, we
see the stimulation conditions of neurons, upon satisfaction of which neurons in
the \nckt{} stimulate at any time $t>0$.

\textbf{Initial Condition:} At time $t = 0$, the neuron $I$ is stimulated and
all other neurons are set to be non-stimulated.

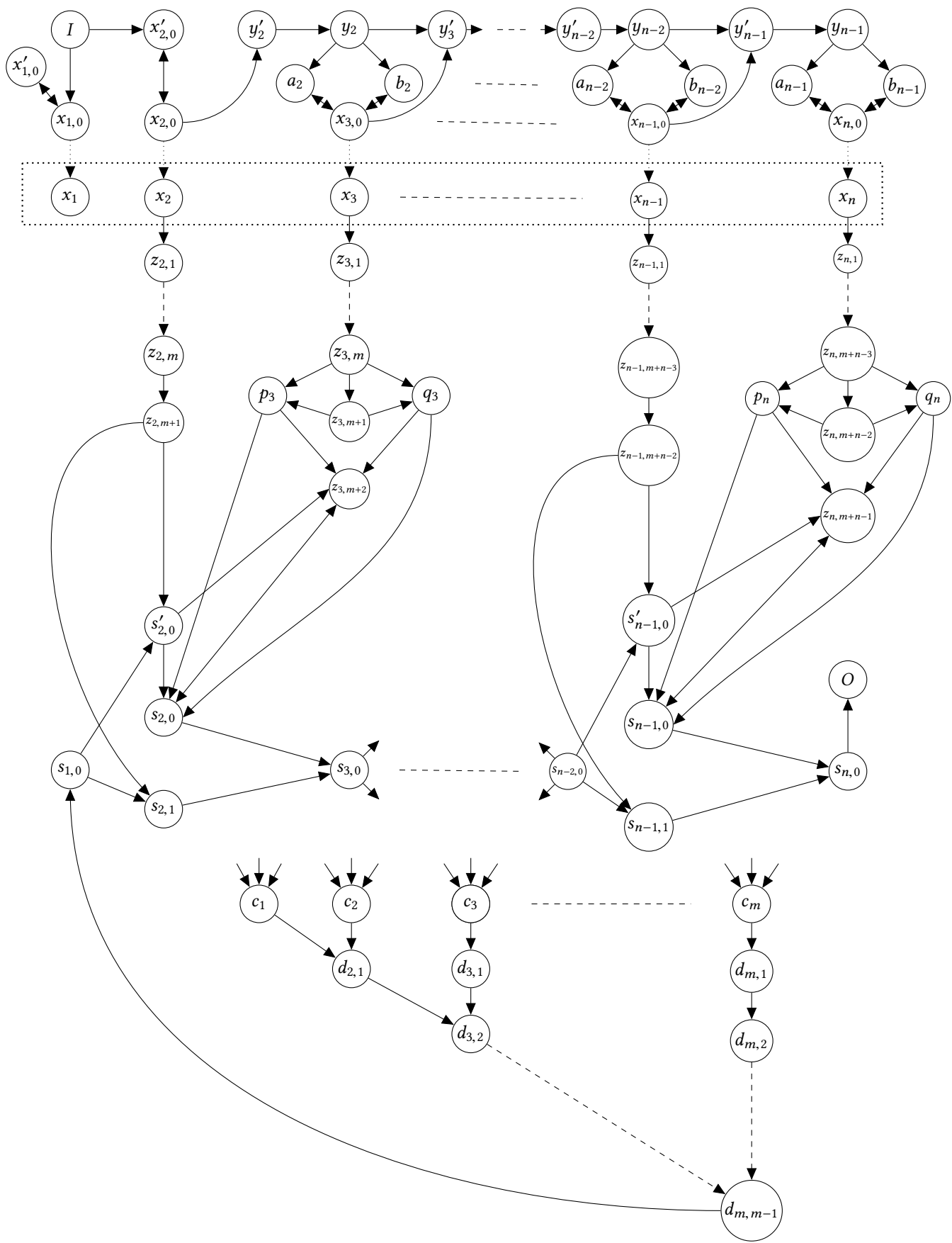
\begin{figure}[ht]
  \centering
  \begin{tikzpicture}
\node[latent](i){$I$};
\node[latent, right=1.0 of i](x20'){$x_{2,0}'$};
\node[latent, below=1.0 of i](x10){$x_{1,0}$};
\node[latent, above left=0.5 and 0.3 of x10](x10'){$x_{1,0}'$};
\node[latent, below=1.0 of x20'](x20){$x_{2,0}$};

\node[latent, right=1.0 of x20'](y2'){$y_2'$};
\node[latent, right=1.0 of y2'](y2){$y_2$};
\node[latent, below left=0.5 and 0.5 of y2](a2){$a_2$};
\node[latent, below right=0.5 and 0.5 of y2](b2){$b_2$};
\node[latent, below=1.0 of y2](x30){$x_{3,0}$};

\node[latent, right=1.1 of y2](y3'){$y_3'$};
\node[latent, right=3.0 of y3'](y3){$y_{n-2}$};
\node[latent, left=0.5 of y3](yn-2'){$y_{n-2}'$};
\node[latent, below left=0.5 and 0.5 of y3](a3){$a_{n-2}$};
\node[latent, below right=0.5 and 0.5 of y3](b3){$b_{n-2}$};
\node[latent, below=1.0 of y3,scale=0.8](x40){$x_{n-1,0}$};

\node[latent, right=1.1 of y3](yn1'){$y_{n-1}'$};
\node[latent, right=1.0 of yn1'](yn1){$y_{n-1}$};
\node[latent, below left=0.5 and 0.5 of yn1](an1){$a_{n-1}$};
\node[latent, below right=0.5 and 0.5 of yn1](bn1){$b_{n-1}$};
\node[latent, below=1.0 of yn1](xn0){$x_{n,0}$};

\edge{i}{x10};
\draw[<->] (x10) -- (x10');
\edge{i}{x20'};
\draw[<->] (x20') -- (x20);
\draw[->] (x20) to [out=0,in=270] (y2');

\edge{y2'}{y2};
\edge{y2}{a2};
\edge{y2}{b2};
\draw[<->] (a2) -- (x30);
\draw[<->] (b2) -- (x30);
\edge{y2}{y3'};

\draw[->] (x30) to [out=0,in=270] (y3');
\edge{yn-2'}{y3};
\edge{y3}{a3};
\edge{y3}{b3};
\draw[<->] (a3) -- (x40);
\draw[<->] (b3) -- (x40);
\edge{y3}{yn1'};

\draw[->] (x40) to [out=0,in=270] (yn1');
\node[right=0.3 of y3'](y4){};
\node[left=0.3 of yn-2'](yn2){};
\draw[dashed,-] (y4) -- (yn2);
\node[right=0.7 of b2](a31){};
\node[left=0.3 of a3](b31){};
\draw[dashed,-] (a31) -- (b31);
\edge{y3'}{y4};
\edge{yn2}{yn-2'};

\edge{yn1'}{yn1};
\edge{yn1}{an1};
\edge{yn1}{bn1};
\draw[<->] (an1) -- (xn0);
\draw[<->] (bn1) -- (xn0);

\node[right=1.0 of x30](x50){};
\node[left=1.0 of x40](xn'){};
\draw[dashed,-] (x50) -- (xn');

\node[latent, below=0.7 of x10] (x1) {$x_1$};
\node[latent, below=0.7 of x20] (x2) {$x_2$};
\node[latent, below=0.7 of x30] (x3) {$x_3$};
\node[latent, below=0.7 of x40, scale=0.9] (x4) {$x_{n-1}$};
\node[latent, below=0.7 of xn0] (xn) {$x_n$};

\node[right=5.5 of x1](x31){};
\node[left=4.3 of xn] (x41){};
\draw[dashed,-] (x31) -- (x41);

\draw[dotted,->] (x10) -- (x1);
\draw[dotted,->] (x20) -- (x2);
\draw[dotted,->] (x30) -- (x3);
\draw[dotted,->] (x40) -- (x4);
\draw[dotted,->] (xn0) -- (xn);

\node[above left= 0.25 and 0.5 of x1](a){};
\node[below right= 0.1 and 0.25 of xn](b){};
\draw[thick,dotted] (a) rectangle (b);

\node[latent, below=0.5 of x2](z21){$z_{2,1}$};
\node[latent, below=1.0 of z21](z2m){$z_{2,m}$};
\node[latent, below=0.5 of z2m, scale=0.75](z2m+1){$z_{2,m+1}$};

\node[latent, below=0.5 of x3](z31){$z_{3,1}$};
\node[latent, below=1.0 of z31](z3m){$z_{3,m}$};
\node[latent, below=0.5 of z3m, scale=0.75](z3m+1){$z_{3,m+1}$};
\node[latent, below=0.5 of z3m+1, scale=0.75](z3m+2){$z_{3,m+2}$};
\node[latent, below left=0.25 and 1.0 of z3m](p3){$p_3$};
\node[latent, below right=0.25 and 1.0 of z3m](q3){$q_3$};

\node[latent, below=0.5 of x4, scale=0.75](zn-11){$z_{n-1,1}$};
\node[latent, below=1.0 of zn-11, scale=0.7](zn-1m+n-3){$z_{n-1,m+n-3}$};
\node[latent, below=0.5 of zn-1m+n-3, scale=0.7](zn-1m+n-2){$z_{n-1,m+n-2}$};

\node[latent, below=0.5 of xn, scale=0.75](zn1){$z_{n,1}$};
\node[latent, below=1.0 of zn1, scale=0.75](znm+n-3){$z_{n,m+n-3}$};
\node[latent, below=0.5 of znm+n-3, scale=0.75](znm+n-2){$z_{n,m+n-2}$};
\node[latent, below=0.5 of znm+n-2, scale=0.75](znm+n-1){$z_{n,m+n-1}$};
\node[latent, below left=0.25 and 1.0 of znm+n-3, scale=0.9](pn){$p_{n}$};
\node[latent, below right=0.25 and 1.0 of znm+n-3, scale=0.9](qn){$q_{n}$};

\edge{x2}{z21};
\draw[dashed,->](z21) -- (z2m);
\edge{z2m}{z2m+1};

\edge{x3}{z31};
\draw[dashed,->](z31)--(z3m);
\edge{z3m}{z3m+1};
\edge{z3m}{p3};
\edge{z3m}{q3};
\edge{z3m+1}{p3};
\edge{z3m+1}{q3};
\edge{p3}{z3m+2};
\edge{q3}{z3m+2};

\edge{x4}{zn-11};
\draw[dashed,->](zn-11) -- (zn-1m+n-3);
\edge{zn-1m+n-3}{zn-1m+n-2};

\edge{xn}{zn1};
\draw[dashed,->](zn1)--(znm+n-3);
\edge{znm+n-3}{znm+n-2};
\edge{znm+n-3}{pn};
\edge{znm+n-3}{qn};
\edge{znm+n-2}{pn};
\edge{znm+n-2}{qn};
\edge{pn}{znm+n-1};
\edge{qn}{znm+n-1};

\node[latent, below=10.0 of x1] (s10) {$s_{1,0}$};
\node[latent, below=9.0 of x2] (s20) {$s_{2,0}$};
\node[latent, above= 1.0 of s20] (s20'){$s_{2,0}'$};
\node[latent, below =1.0 of s20] (s21){$s_{2,1}$};
\node[latent, below =10 of x3] (s30){$s_{3,0}$};

\node[latent, below=9.0 of x4] (sn-10) {$s_{n-1,0}$};
\node[latent, above= 1.0 of sn-10] (sn-10'){$s_{n-1,0}'$};
\node[latent, below =1.0 of sn-10] (sn-11){$s_{n-1,1}$};
\node[latent, below =10 of xn] (sn0){$s_{n,0}$};
\node[latent, left=4.5 of sn0, scale=0.75] (sn-20) {$s_{n-2,0}$};
\node[latent, above =1 of sn0] (o){$O$};

\edge{sn0}{o};
\edge{s10}{s20'};
\edge{s10}{s21};
\edge{z2m+1}{s20'};
\edge{s20'}{z3m+2.west};
\draw[<->] (s20) -- (z3m+2);
\edge{s20}{s30};
\edge{s21}{s30};
\edge{s20'}{s20};

\edge{p3}{s20};
\draw[->] (q3) to [out=270,in=45] (s20.east);
\draw[->] (z2m+1) to [out=180,in=135] (s21);

\edge{sn-20}{sn-10'};
\edge{sn-20}{sn-11};
\edge{zn-1m+n-2}{sn-10'};
\edge{sn-10'}{znm+n-1.west};
\draw[<->] (sn-10) -- (znm+n-1);
\edge{sn-10}{sn0};
\edge{sn-11}{sn0};
\edge{sn-10'}{sn-10};
\edge{pn}{sn-10};
\draw[->] (qn) to [out=270,in=45] (sn-10.east);
\draw[->] (zn-1m+n-2) to [out=180,in=135] (sn-11);

\node[above right=0.3 and 0.3 of s30](s30a){};
\node[below right=0.3 and 0.3 of s30](s30b){};
\edge{s30}{s30a};
\edge{s30}{s30b};

\node[above left=0.3 and 0.3 of sn-20](sn-20a){};
\node[below left=0.3 and 0.3 of sn-20](sn-20b){};
\edge{sn-20}{sn-20a};
\edge{sn-20}{sn-20b};

\node[right=0.3 of s30](s30c){};
\node[left=0.3 of sn-20](sn-20c){};
\draw[dashed,-] (s30c) -- (sn-20c);

    \node[latent, below right  =2.0 and 3  of s10] (c1) {$c_1$};
    \node[latent, right= 1.0 of c1] (c2) {$c_2$};
    \node[latent, right= 1.5 of c2] (c3) {$c_3$};
    \node[latent, right= 1.5 of c2] (c3) {$c_3$};
    \node[ right= 0.5 of c3] (c4) {};
    \node[latent, right= 4.5 of c3] (cm) {$c_m$};
    \node[ left= 0.5 of cm] (cm') {};
    \node[above right=0.5 and 0.1 of c1] (c1x){};
    \node[above left=0.5 and 0.1 of c1] (c1y){};
    \node[above= 0.5 of c1] (c1z){};
    \edge{c1x}{c1};
    \edge{c1y}{c1};
    \edge{c1z}{c1};
    \node[above right=0.5 and 0.2 of c2] (c2x){};
    \node[above left=0.5 and 0.2 of c2] (c2y){};
    \node[above= 0.5 of c2] (c2z){};
    \edge{c2x}{c2};
    \edge{c2y}{c2};
    \edge{c2z}{c2};
    \node[above right=0.5 and 0.2 of c3] (c3x){};
    \node[above left=0.5 and 0.2 of c3] (c3y){};
    \node[above= 0.5 of c3] (c3z){};
    \edge{c3x}{c3};
    \edge{c3y}{c3};
    \edge{c3z}{c3};
    \node[above right=0.5 and 0.2 of cm] (cmx){};
    \node[above left=0.5 and 0.2 of cm] (cmy){};
    \node[above= 0.5 of cm] (cmz){};
    \edge{cmx}{cm};
    \edge{cmy}{cm};
    \edge{cmz}{cm};
    \draw[dashed, -] (c4) to (cm');
    \node[latent,below =0.5 of c2] (d21){$d_{2,1}$};
    \node[latent,below =0.5 of c3] (d31){$d_{3,1}$};
    \node[latent,below =0.5 of d31] (d32){$d_{3,2}$};
    \node[latent,below =0.5 of cm] (dm1){$d_{m,1}$};
    \node[latent,below =0.5 of dm1] (dm2){$d_{m,2}$};
    \node[latent,below =2.2 of dm2] (dmm'){$d_{m,m-1}$};
    \node[right =0.5 of d32] (d42){};
    \node[below =0.5 of d42] (d43){};
    \node[below right =0.5 and 0.5 of d43] (d54){};
    \node[above left =0.5 and 0.5 of dmm'] (dmm''){};

    \edge{c1}{d21};
    \edge{c2}{d21};
    \edge{c3}{d31};
    \edge{d21}{d32};
    \edge{d31}{d32};
    \edge{cm}{dm1};
    \edge{dm1}{dm2};
    \draw[dashed,->](dm2) to (dmm');
    \draw[dashed,->](d32) to (dmm');
    \draw[->] (dmm') to [out=180,in=270] (s10);

    \end{tikzpicture}
    \caption{Threshold neural circuit for n-variable QBF}
    \label{fig:th-ckt-construction}
    \end{figure}

\textbf{Stimulation Condition:} The stimulation conditions of the neurons
appearing as part of the counter \nckt{} remain the same as those discussed
earlier.  We now describe the stimulation conditions of the other nodes.

For $1<i\leq n $, $z_{i,1}$ stimulates at time $t$ if $x_{i}$ is
stimulated at time $t-1$. Formally,
\begin{equation}
  z_{i,1}\leftarrow x_{i}.\label{eq:th-ckt-zi1}
\end{equation}

For even $1<i<n$ and $1<j\leq m+i-1$, $z_{i,j}$ stimulates at time $t$ if
$z_{i,{j-1}}$ stimulates at time $t-1$.  Formally,
\begin{equation}
  z_{i,j}\leftarrow z_{i,j-1}.\label{eq:th-ckt-even-zij}
\end{equation}

For odd $1<i\leq n$ and $1<j\leq m+i-2$, $z_{i,j}$ stimulates at time $t$ if
$z_{i,{j-1}}$ stimulates at time $t-1$. Formally,
\begin{equation}
  z_{i,j}\leftarrow z_{i,j-1}.\label{eq:th-ckt-odd-zij}
\end{equation}

For odd $1<i\leq n$,  $p_i$ stimulates at time $t$ if at time $t-1$, $z_{i,m+i-3}$ is stimulated and $z_{i,m+i-2}$ is not stimulated, and $q_i$ stimulates at time $t$ if at time $t-1$, $z_{i,m+i-3}$ is not stimulated and $z_{i,m+i-2}$ is stimulated. Formally,
\begin{align}
  p_i & \leftarrow z_{i,m+i-3} \land \overline{z_{i,m+i-2}}.\label{eq:th-ckt-pi}\\
  q_i & \leftarrow \overline{z_{i,m+i-3}} \land z_{i,m+i-2}.\label{eq:th-ckt-qi}
\end{align}

For odd $1<i\leq n$, $z_{i,m+i-1}$ stimulates at time $t$ if, at time $t-1$, at
least one of $s_{i-1,0}'$ or $s_{i-1,0}$ is stimulated and neither $p_i$ nor
$q_i$ is stimulated. Formally,
\begin{equation}
  z_{i,m+i-1}\leftarrow (s_{i-1,0}' \lor s_{i-1,0})\land \overline{p_i} \land \overline{q_i}.\label{eq:th-ckt-odd-zim+i-1}
\end{equation}
We note here a couple of simple consequences of
\cref{eq:th-ckt-zi1,eq:th-ckt-even-zij,eq:th-ckt-odd-zij,eq:th-ckt-pi,eq:th-ckt-qi},
and the initial conditions.
\begin{observation}\label{obv:th-z-even-update}
  For any $t \geq 0$, for even $1 < i < n$ and $1 \leq j \leq m + i - 1$,
  $z_{i,j}(t + j) = x_i(t)$, and $z_{i,j}(t) = 0$ for $t \leq j$.
\end{observation}

\begin{observation}\label{obv:th-z-odd-update}
  For any $t \geq 0$, for odd $1 < i \leq n$ and $1 \leq j \leq m + i - 2$,
  $z_{i,j}(t + j) = x_i(t)$, and $z_{i,j}(t) = 0$ for $t \leq j$.
\end{observation}

Using the above observations, we get the following claim, which will help us in
analyzing the update in \cref{eq:th-ckt-odd-zim+i-1}.

\begin{claim}\label{obv:th-pi-qi-update}
  For any $t \geq 0$ and for odd $1<i\leq n$,
 \begin{equation}
    \overline{p_i}(t) \land \overline{q_i}(t)
    =
    \begin{cases}
      0 & \text{if $t=l\cdot 2^{i-1} +2n +m+i-2$ for some integer $l \geq 1$}\\
      1 & \text{otherwise.}
    \end{cases}\label{eq:z-induction}
  \end{equation}
\end{claim}
\begin{proof}
  The initial condition implies that $p_i(0) = q_i(0) = 0$, so we concentrate on
  the case $t \geq 1$.  Using
  \cref{eq:th-ckt-pi,eq:th-ckt-qi,obv:th-z-odd-update}, we then have
  \begin{align}
     p_i(t)\lor q_i(t) & = (z_{i,m+i-3}(t-1) \land \overline{z_{i,m+i-2}}(t-1)) \lor (\overline{z_{i,m+i-3}(t-1)} \land z_{i,m+i-2}(t-1))\nonumber\\
     & = z_{i,m+i-3}(t-1) \oplus z_{i,m+i-2}(t-1).\label{eq:p-q-equation}
  \end{align}
  When $t \leq m + i - 2$, \Cref{obv:th-z-odd-update} implies that the right
  hand side of \cref{eq:p-q-equation} is $0$.  On the other hand, when
  $t = t' + m + i - 1$ for some $t' \geq 0$, we use \Cref{obv:th-z-odd-update}
  along with \cref{eq:p-q-equation} to get
  \begin{displaymath}
    p_i(t)\lor q_i(t) =  x_i(t' + 1) \oplus x_i(t').
  \end{displaymath}
  Now, recalling that the neuron $x_i$ is just a relabelled version of the
  neuron $x_{i, 2n + 4 - 2i}$ in the counter circuit (since $i > 1$), we see
  from \Cref{cor:counter-bit-values} that $x_i(t' + 1) \oplus x_i(t') = 1$ if
  and only if $t' = l\cdot 2^{i-1} + 2n - 1$, for some integer $l \geq 1$.  We
  thus get that $p_i(t) \lor q_i(t) = 1$ if and only if
  $t = l\cdot 2^{i-1} + 2n + m +i - 2$ for some integer $l \geq 1$.  This proves
  the claim, since $\overline{p_i(t)} \land \overline{q_i(t)}$ =
  $\overline{ p_i(t)\lor q_i(t)}$.
\end{proof}

We now continue with our description of the stimulation conditions. The clause
neuron $c_i$ stimulates at time $t$ if the stimulation states of the variable
neurons $x_j$ corresponding to the variables appearing in $c_i$ give a
satisfying assignment for the clause $c_i$ at time $t-1$.  For $2<i\leq m$,
$d_{i,1}$ stimulates at time $t$ if $c_{i}$ is stimulated at time $t-1$, while
$d_{2,1}$ stimulates at time $t$ if both $c_{1}$ and $c_2$ are stimulated at
time $t-1$.
\begin{align}
  d_{2,1} &\leftarrow c_{1}\land c_{2}\text{, and}\label{eq:th-ckt-d2}\\
  d_{i,1} &\leftarrow c_{i}\text{, when $i > 2$}.\label{eq:th-ckt-di1}
\end{align}
For $3<i\leq m$ and $1<j<i-1$, $d_{ij}$ stimulates at time $t$ if $d_{i,j-1}$ is
stimulated at time $t-1$.
\begin{equation}
  d_{i,j}\leftarrow d_{i,j-1}.\label{eq:th-ckt-dij}
\end{equation}
For $2<i\leq m$, $d_{i,i-1}$ stimulates at time $t$ if both $d_{i-1,i-2}$ and
$d_{i,i-2}$ are stimulated at time $t-1$.
\begin{equation}
  d_{i,i-1}\leftarrow d_{i-1,i-2}\land d_{i,i-2}.\label{eq:th-ckt-dilast}
\end{equation}
The neuron
$s_{1,0}$ stimulates at time $t$ if $d_{m,m-1}$ stimulates at time $t-1$.
\begin{equation}
  s_{1,0}\leftarrow d_{m,m-1}.\label{eq:th-ckt-s10}
\end{equation}
Before proceeding, we record here a simple consequence of
\cref{eq:th-ckt-d2,eq:th-ckt-dilast,eq:th-ckt-dilast,eq:th-ckt-di1,eq:th-ckt-dij,eq:th-ckt-s10}
and the initial conditions.
\begin{observation}
    \label{obv:th-d-s-update}
    For any $t \geq 0$, we have $s_{1,0}(t + m) = 1$ if and only if $c_i(t) = 1$
    for all $1 \leq i \leq m$.  Also, $s_{1,0}(t) = 0$ for all $t \leq m$.
\end{observation}

For even $1<i<n$, $s_{i,0}'$ stimulates at time $t$ if  $s_{i-1,0}$ is stimulated and $z_{i,m+i-1}$ is not stimulated at time $t-1$. Formally,
\begin{equation}
  s_{i,0}'\leftarrow s_{i-1,0} \land \overline{z_{i,m+i-1}}.\label{eq:th-ckt-si0'}
\end{equation}
For even $1<i<n$, $s_{i,0}$ stimulates at time $t$ if at least one of $s_{i,0}'$
and $z_{i+1,m+i}$ is stimulated and neither of $p_{i+1}$ and $q_{i+1}$ is
stimulated at time $t-1$. Formally,
\begin{equation}
  s_{i,0}\leftarrow (s_{i,0}' \lor z_{i+1,m+i}) \land \overline{p_{i+1}} \land \overline{q_{i+1}}.\label{eq:th-ckt-si0-even}
\end{equation}
For even $1<i<n$, $s_{i,1}$ stimulates at time $t$ if both $s_{i-1,0}$ and
$z_{i,m+i-1}$ are stimulated at time $t-1$. Formally,
\begin{equation}
  s_{i,1}\leftarrow s_{i-1,0}\land {z_{i,m+i-1}}.\label{eq:ckt-si1-even}
\end{equation}
For odd $1<i \leq n$, $s_{i,0}$ stimulates at time $t$ if both $s_{i-1,0}$ and
$s_{i-1,1}$ are stimulated at time $t-1$. Formally,
\begin{equation}
  s_{i,0}\leftarrow s_{i-1,0}\land {s_{i-1,1}}.\label{eq:ckt-si0-odd}
\end{equation}
Finally, $O$ stimulates at time $t$ if $s_{n,0}$ stimulates at time $t-1$.
\begin{equation}
  O \leftarrow s_{n,0}.\label{eq:ckt-O}
\end{equation}
Note that the size of $C_\phi$ is $\poly{n}$, and the above description of $C_\phi$ can be constructed in time $\poly{n}$ given $\phi$ as input.

Observe that all neurons in the above circuits have small degree and have only
threshold gates.  We record this formally in the following observation.

\begin{observation}
\label{cor:th-our-ckt-is-threshold-ckt}
The update function of all neurons in $C_\phi$ as constructed above are
threshold update functions with weights of absolute value at most $2$.  Further
each neuron has degree at most $6$.
\end{observation}
\begin{proof}
  Except for the neurons $z_{i,m+i-1}$ for odd $1<i\leq n$, $s_{i,0}$ for even
  $1<i<n$, and the clause neurons $c_i$ for $1\leq i \leq m$, all other neurons
  in \cref{fig:th-ckt-construction} either copy the stimulation state of some
  neuron (e.g., in \cref{eq:th-ckt-zi1}, $z_{2,1}$ copies the stimulation state
  of $x_2$ ) or compute conjunctions or disjunctions of stimulation states of
  two neurons.  All of these are easily seen to be threshold functions with
  weights of absolute values at most $2$. The clause neurons $c_i$ compute a
  conjunction of three other neurons (possibly negated), and this also is a
  threshold function with weights coming from the same set.  We now consider the
  remaining neurons ($z_{i,m+i-1}$ for odd $1<i\leq n$ and $s_{i,0}$ for even
  $1<i<n$).

  We begin by noting that the right hand side of the update equation
  \eqref{eq:th-ckt-odd-zim+i-1} of the neurons $z_{i,m+i-1}$ for odd $1<i\leq n$
  is $1$ if and only if the threshold function
  $[s_{i-1,0}'+s_{i-1,0}-2p_i - 2q_i \geq 1]$ evaluates to $1$ (this is because
  when the variables are restricted to take values in the Boolean domain
  $\inb{0,1}$, the latter happens if and only if at least one of $s_{i-1,0}'$
  and $s_{i-1,0}$ evaluates to $1$, and neither of $p_i$ and $q_i$ do).
  Similarly for the neurons $s_{i,0}$ for even $1<i<n$, we observe that the
  right hand side of the update equation \eqref{eq:th-ckt-si0-even} is $1$ if
  and only if the threshold function
  $[s_{i,0}'+z_{i+1,m+i}-2p_{i+1} - 2q_{i+1} \geq 1]$ evaluates to $1$ (this is
  because when the variables are restricted to take values in the Boolean domain
  $\inb{0,1}$, the latter happens if and only if at least one of $s_{i,0}'$ and
  $z_{i+1,m+i}$ evaluates to $1$, and neither of $p_{i+1}$ and $q_{i+1}$ do).

  We now check that $C_\phi$ has bounded degree.  Indeed, the degree of a node
  $x_{i}$, $1 \leq i \leq n$, is two more than the number of clauses in which
  the corresponding variable $x_i$ occurs, so that their degree in the graph is
  at most $6$ (since we started with a $\phi$ in which each variable occurs at
  most $4$ times).  From the description above (see also
  \cref{fig:th-ckt-construction}) we can check that all other nodes have degree
  at most $6$ (the degree of 6 is also achieved by neurons $x_{i,0}$, for odd
  $1<i\leq n$, in the counter gadget and $s_{i,0}$ ,for even $1<i<n$).
\end{proof}

We now begin the analysis of the reduction with the following two claims.
\begin{claim}
\label{obv:th-both-si0-and-zi+1m+i-is-0}
For any even $1<i<n$, $s_{i,0}(t)=z_{i+1,m+i}(t)=0$ if $t=l\cdot 2^{i} +2n +m+i$
for some integer $l \geq 1$.
\end{claim}

\begin{proof}
  From \cref{obv:th-pi-qi-update}, we see that
  $\overline{p_{i+1}(t - 1)} \land \overline{q_{i+1}(t - 1)} = 0$ when
  $t = l\cdot 2^{i} + 2n + m + i$ for some integer $l \geq 1$.  The claim then
  follows from \cref{eq:th-ckt-odd-zim+i-1,eq:th-ckt-si0-even}.
\end{proof}

\begin{claim}
\label{obv:th-si0-and-zi+1m+i-recurrence}
Fix an even $i$ satisfying $1 < i < n$ and an integer $l \geq 0$.  Suppose that
$t$ satisfying $l\cdot 2^i +2n+m+i+1\leq t \leq (l+1)\cdot 2^i +2n+m+i-1$ is
such that $s_{i,0}(t)=z_{i+1,m+i}(t)=1$.  Then we have
$s_{i,0}(t')=z_{i+1,m+i}(t')=1$ for all $t'$ satisfying
$t\leq t'\leq (l+1)\cdot 2^i +2n+m+i-1$.
\end{claim}

\begin{proof}
  From \cref{obv:th-pi-qi-update}, we see that
  $\overline{p_{i+1}}(t'') \land \overline{q_{i+1}}(t'') = 0$ only when
  $t'' = \ell'\cdot 2^{i} + 2n + m + i - 1$ for some integer $\ell' \geq 1$.  It
  follows that for all $t''$ such that
  $l\cdot 2^{i} +2n+m+i \leq t'' \leq (l+1)\cdot 2^{i} +2n+m+i-2$ for some
  integer $l\geq 0$, we have
  $\overline{p_{i+1}}(t'') \land \overline{q_{i+1}}(t'') = 1$.  Thus, when
  $l\cdot 2^{i} + 2n + m + i \leq t'' \leq (l+1) \cdot 2^{i} + 2n + m + i-2$,
  the update equations in \cref{eq:th-ckt-odd-zim+i-1,eq:th-ckt-si0-even}
  simplify to
  \begin{align*}
    s_{i,0}(t'' + 1) &= s_{i,0}'(t'') \lor z_{i+1,m+i}(t''), \text{and}\\
    z_{i+1,m+i}(t'' + 1) &= s_{i,0}'(t'') \lor s_{i,0}(t'').
  \end{align*}
  It then follows that if $t$ is such that
  $l\cdot 2^{i} +2n+m+i \leq t \leq (l+1)\cdot 2^{i} +2n+m+i-2$ and
  $s_{i,0}(t) = z_{i+1,m+i}(t) = 1$, then it is also the case that
  $s_{i,0}(t+1) = z_{i+1, m+i}(t + 1)= 1$.  The claim then follows immediately.
\end{proof}

Next we record an observation based on the properties of the counter \nckt{}
proved in \cref{cor:th-stimulation-of-x_i_(n+1-i)}.
\begin{claim}\label{obv:th-base-neuron}
  For any $t \geq 0$, $s_{1, 0}(t + m + 1) = 1$ if and only if
  $\vec{x}(t) \defeq x_1(t)x_2(t) \dots x_n(t)$ is a satisfying assignment for
  $\phi$.  In particular, for any $t'\leq 2n+m+1$, $s_{1,0}(t') = 0$.  Further,
  for any $t' \geq 2n + m + 1$, $s_{1,0}(t') = 1$ if and only if
  $t' = 2n + m + 1 + \sum_{i=1}^na_i2^{i-1} \mod 2^n$ for some satisfying
  assignment $a_1, a_2, \dots, a_n$ of $\phi$.
\end{claim}
\begin{proof}
  From \Cref{obv:th-d-s-update}, we see that the neuron $s_{1,0}$ stimulates at
  some time $t' \geq m + 1$ if and only if all the clause neurons $c_i$
  stimulate at the time $t'-m$. The latter in turn, happens if and only if the
  stimulation state of the neurons $x_1, x_2, \dots, x_n$ forms a satisfying
  assignment for $\phi$ at time $t'-m-1$.  This proves the first part of the
  claim.

  For the second part, \Cref{cor:counter-bit-values,cor:th-counter-x1-2n} imply
  that for all $1\leq i \leq n$, and for $t\leq 2n$, we have $x_i(t)=0$ (recall
  that $x_{1}$ is identical to the neuron $x_{1, 2n}$ of the counter gadget
  considered in \Cref{cor:counter-bit-values,cor:th-counter-x1-2n}, while for
  $2 \leq i \leq n$, $x_i$ is identical to the neuron $x_{i, 2n + 4 - 2i}$ of
  the same gadget). On the other hand, from our assumption on the TQBF instance,
  $\phi(x_1, x_2, \dots, x_n)$ is false when $x_1=x_2=\ldots =x_n=0$.  The first
  part proved above thus implies that for any $t'$ such that
  $m + 1 \leq t' \leq 2n+m+1$, we have $s_{1,0}(t') = 0$. When
  $0 \leq t' \leq m$, we have $s_{i, 0}(t') = 0$ from
  \Cref{obv:th-d-s-update}. These prove the second part of our claim.

  For the third part, \Cref{cor:counter-bit-values,cor:th-counter-x1-2n} imply that for
  $\vec{a} \neq \vec{0}$, $\vec{x}(t) = a_1a_2\dots a_n$ if and only if
  $t = 2n + \sum_{i=1}^{n}a_i 2^{i-1} \mod 2^n$.  This part therefore follows now
  from the already proved first part (since, by the assumption on the TQBF
  instance, $\vec{0}$ is not a satisfying assignment of $\phi$).
\end{proof}

Finally, we state the following main claim regarding the behavior of the \nckt{}
$C_\phi$.

\begin{lemma}
  \label{lem:th-induction-circuit}
  The times at which the nodes in $C_\phi$ labeled $s_{i,0}'$, $s_{i,1}$ (for even $1 < i < n$) and $s_{i,0}$ (for
  $ 1 \leq i \leq n$)  stimulate are
  characterized as follows:
  \begin{enumerate}
  \item \label{item:ckt-1} For any $t \geq 2n$, and odd $1 \leq i \leq n$, we
    have $s_{i, 0}(t  + m + i) = 1$ if and only if the formula
    \[
      \forall x_{i-1} \exists x_{i-2} \dots \forall x_2 \exists x_1 \;
      \phi(x_1, x_2, \dots, x_{i-1}, a_i, a_{i+1}, \dots, a_n)
    \]
    is true when $a_j = x_j(t)$ for all $j \geq i$.  Further, $s_{i,0}(t') = 0$
    for all $t' < 2n + m + i$.

  \item \label{item:ckt-2} For any $t \geq 2n$, and even $1 \leq i \leq n$, we have
    $s_{i, 1}(t + m + i) = 1$ if and only if $x_{i}(t) = 1$ and the formula
    \[
      \forall x_{i-2} \exists x_{i-3} \dots \forall x_2 \exists x_1 \;
      \phi(x_1, x_2, \dots, x_{i-2}, a_{i-1}, a_i, a_{i+1}, \dots, a_n)
    \]
    is true when $a_j = x_j(t)$ for all $j \geq i - 1$. Further,
    $s_{i,1}(t') = 0$ for all $t' < 2n + m + i$.

    \item \label{item:ckt-3} For any $t \geq 2n$, and even $1 \leq i \leq n$, we have
    $s_{i, 0}'(t + m + i) = 1$ if and only if $x_{i}(t) = 0$ and the formula
    \[
      \forall x_{i-2} \exists x_{i-3} \dots \forall x_2 \exists x_1 \;
      \phi(x_1, x_2, \dots, x_{i-2}, a_{i-1}, a_i, a_{i+1}, \dots, a_n)
    \]
    is true when $a_j = x_j(t)$ for all $j \geq i - 1$. Further,
    $s_{i,0}'(t') = 0$ for all $t' < 2n + m + i$.

  \item \label{item:ckt-4} For any $t \geq 2n+1$, and even $1 \leq i \leq n$, we
    have $s_{i, 0}(t + m + i) = 1$ if and only if there exists $t'$ satisfying
    $2n + \floor{(t-2n)/2^i}\cdot 2^i \leq t' < t$ such that $x_{i}(t') = 0$
    and the formula
    \[
      \forall x_{i-2} \exists x_{i-3} \dots \forall x_2 \exists x_1 \;
      \phi(x_1, x_2, \dots, x_{i-2}, a_{i-1}, a_i, a_{i+1}, \dots, a_n)
    \]
    is true when $a_j = x_j(t')$ for all $j \geq i - 1$.  Further,
    $s_{i,0}(t) = 0$ for all $t < 2n + m + i+1$.
  \end{enumerate}
\end{lemma}
\begin{proof}
  We prove the lemma by induction on the value of $i$.  In the base case,
  $i = 1$, \cref{item:ckt-1} follows immediately from \cref{obv:th-base-neuron},
  which characterizes the times $t' \geq 0$ at which $s_{1,0}(t') = 1$, and
  \cref{item:ckt-2,item:ckt-3,item:ckt-4} are vacuously true as $i$ is odd.  Thus, the base
  case is established.

  For the induction, we suppose that for some $2 \leq k \leq n$,
  \cref{item:ckt-1,item:ckt-2,item:ckt-3,item:ckt-4} of the lemma are true for all $i$
  satisfying $1 \leq i \leq k - 1$, and show that this implies that they remain
  true for $i = k$ as well.  We divide the inductive step into two cases, based
  on whether $k$ is even or odd.

  \noindent \textbf{Case 1: $k > 1$ is odd.}  In this case
  \cref{item:ckt-2,item:ckt-3,item:ckt-4} are vacuously true, as $i = k$ is odd.  Thus,
  only \cref{item:ckt-1} remains to be proved.  From \cref{eq:ckt-si0-odd}, we
  know that for any $t' \geq 1$, $s_{k,0}(t') = 1$ if and only if
  $s_{k-1,0}(t' - 1) = s_{k-1, 1}(t' - 1) = 1$. From the induction hypothesis,
  \cref{item:ckt-2} of the lemma is true for $i = k - 1$, so that we have
  $s_{k-1,1}(t' - 1) = 0$ when $1 \leq t' < 2n + m + k$ (and, also $s_{k, 0}(0) = 0$
  according to the initialization conditions). Thus, we get
  $s_{k, 0}(t') = 0$ when $ 0 \leq t' < 2n + m + k$, which establishes the second
  part of \cref{item:ckt-1}.  We now proceed to prove the first part of
  \cref{item:ckt-1}.

  For any $t \geq 2n$, we have (again from \cref{eq:ckt-si0-odd}) that
  $s_{k,0}(t + m + k) = 1$ if and only if
  $s_{k-1,0}(t + m + k - 1) = s_{k-1,1}(t + m + k - 1) = 1$.  From the induction
  hypothesis (specifically, \cref{item:ckt-2,item:ckt-4} of the lemma) applied
  with $i = k - 1$,  the latter holds if and only if the following
  two conditions are satisfied:
  \begin{enumerate}[(a)]
  \item \label{item:case-1-1} $x_{k-1}(t) = 1$ and the formula
    \[
      \forall x_{k-3} \exists x_{k-4} \dots \forall x_2 \exists x_1 \;
      \phi(x_1, x_2, \dots, x_{k-3}, a_{k-2}, a_{k-1}, a_{k}, \dots, a_n)
    \]
    is true when $a_j = x_j(t)$ for all $j \geq k - 2$.
  \item \label{item:case-1-0} there exists a $t'$ satisfying
    $2n + \floor{(t-2n)/2^{k-1}}\cdot 2^{k-1} \leq t' < t$ such that
    $x_{k-1}(t') = 0$ and the formula
    \[
      \forall x_{k-3} \exists x_{k-4} \dots \forall x_2 \exists x_1 \; \phi(x_1,
      x_2, \dots, x_{k-3}, a_{k-2}', a_{k-1}', a_{k}', \dots, a_n')
    \]
    is true when $a_j' = x_j(t')$ for all $j \geq k - 2$.
  \end{enumerate}
  Now, from \cref{cor:th-stimulation-of-x_i_(n+1-i)}, we know that for all
  $t'' \geq 2n$, the $n$-bit binary integer
  $x_n(t'')$$x_{n-1}(t'')\ldots$$x_1(t'')$ is exactly the remainder obtained on
  dividing $t'' - 2n$ by $2^n$.  Since
  $\floor{(t' - 2n)/2^{k-1}} = \floor{(t - 2n)/2^{k-1}}$, this implies that we
  have $x_j(t) = x_j(t')$ for all $j \geq k$, so that, we have $a_j = a_j'$ for
  all $j \geq k$.  Thus, we conclude that for $t \geq 2n$, we have
  $s_{k, 0}(t + m + k) = 1$ if and only if the following condition (equivalent
  to the conjunction of the conditions in \cref{item:case-1-1,item:case-1-0}
  above) holds: there exist $a_{k-2}, a_{k-2}' \in \inb{0, 1}$ such that both
  the formulas
  \[
    \forall x_{k-3} \exists x_{k-4} \dots \forall x_2 \exists x_1 \;
    \phi(x_1, x_2, \dots, x_{k-3}, a_{k-2}, 1 , a_{k}, \dots, a_n)
  \]
  and
  \[
    \forall x_{k-3} \exists x_{k-4} \dots \forall x_2 \exists x_1 \;
    \phi(x_1, x_2, \dots, x_{k-3}, a_{k-2}', 0, a_{k}, \dots, a_n)
  \]
  are true, when $a_j = x_j(t)$ for all $j \geq k$.  But this latter condition
  is equivalent to the condition that the formula
  \[
    \forall x_{k-1}\exists x_{k-2} \forall x_{k-3} \exists x_{k-4} \dots \forall
    x_2 \exists x_1 \; \phi(x_1, x_2, \dots, x_{k-3}, x_{k-2}, x_{k-1} , a_{k},
    \dots, a_n)
  \]
  is true.  This establishes the claim in \cref{item:ckt-1} of the lemma for
  $i = k$.

  \noindent \textbf{Case 2: $ 2 \leq k < n$ is even.} In this case,
  \cref{item:ckt-1} of the lemma is vacuously true, as $i = k$ is even.  Thus
  only \cref{item:ckt-2,item:ckt-3,item:ckt-4} remain to be proved.  We first consider
  \cref{item:ckt-2}.

  From \cref{eq:ckt-si1-even}, we know that for any $t' \geq 1$,
  $s_{k,1}(t') = 0$ if $s_{k-1,0}(t' - 1) = 0$ (and also that $s_{k, 1}(0) = 0$,
  which is enforced by the initial condition). From the induction
  hypothesis, \cref{item:ckt-1} of the lemma is true for $i = k - 1$, so that we
  have $s_{k-1, 0}(t' - 1) = 0$ when $1 \leq t' < 2n + m + k$.  Thus, we get
  $s_{k, 1}(t') = 0$ when $ 0 \leq t' < 2n + m + k$, which establishes the second
  part of \cref{item:ckt-2}.  We now proceed to prove the first part of
  \cref{item:ckt-2}.

  For any $t \geq 2n$, we have (again from \cref{eq:ckt-si1-even}) that
  $s_{k,1}(t + m + k) = 1$ if and only if
  $s_{k-1,0}(t + m + k - 1) = z_{k, m + k - 1}(t + m + k - 1) = 1$.  From
  \cref{obv:th-z-even-update},  $z_{k, m + k - 1}(t + m + k - 1) = 1$ if and
  only if $x_k(t) = 1$.  On the other hand, from the induction hypothesis
  (specifically, \cref{item:ckt-1} of the lemma) applied with $i = k - 1$,  $s_{k - 1, 0}(t + m + k - 1) = 1$ if and only if the formula
  \[
    \forall x_{k - 2} \exists x_{k - 3} \dots \forall x_2 \exists x_1 \;
    \phi(x_1, x_2, \dots, x_{k - 2}, a_{k - 1}, a_{k}, a_{k + 1}, \dots, a_n)
  \]
  is true when $a_j = x_j(t)$ for all $j \geq k - 1$.  Together, these two
  observations finish the proof of \cref{item:ckt-2} for $i = k$.

  We now proceed to prove \cref{item:ckt-3} for $i=k$.  We note that the
  argument is virtually identical on the one already given for
  \cref{item:ckt-2}, but we include the details for completeness.  From
  \cref{eq:th-ckt-si0'}, we know that for any $t' \geq 1$, $s_{k,0}'(t') = 0$ if
  $s_{k-1,0}(t' - 1) = 0$ (and also that $s_{k, 0}'(0) = 0$, which is enforced
  by the initial condition). From the induction hypothesis,
  \cref{item:ckt-1} of the lemma is true for $i = k - 1$, so that we have
  $s_{k-1, 0}(t' - 1) = 0$ when $1 \leq t' < 2n + m + k$.  Thus, we get
  $s_{k, 0}'(t') = 0$ when $ 0 \leq t' < 2n + m + k$, which establishes the
  second part of \cref{item:ckt-3}.  We now proceed to prove the first part of
  \cref{item:ckt-3}.

  For any $t \geq 2n$, we have (again from \cref{eq:th-ckt-si0'}) that
  $s_{k,0}'(t + m + k) = 1$ if and only if
  $s_{k-1,0}(t + m + k - 1) = 1$ and $z_{k, m + k - 1}(t + m + k - 1) = 0$.  From
  \cref{obv:th-z-even-update},  $z_{k, m + k - 1}(t + m + k - 1) = 0$ if and
  only if $x_k(t) = 0$.  On the other hand, from the induction hypothesis
  (specifically, \cref{item:ckt-1} of the lemma) applied with $i = k - 1$,  $s_{k - 1, 0}(t + m + k - 1) = 1$ if and only if the formula
  \[
    \forall x_{k - 2} \exists x_{k - 3} \dots \forall x_2 \exists x_1 \;
    \phi(x_1, x_2, \dots, x_{k - 2}, a_{k - 1}, a_{k}, a_{k + 1}, \dots, a_n)
  \]
  is true when $a_j = x_j(t)$ for all $j \geq k - 1$.  Together, these two
  observations finish the proof of \cref{item:ckt-3} for $i = k$.

  We now proceed to prove \cref{item:ckt-4} for $i = k$.  Suppose, for the sake
  of contradiction, that there exists a time $t'$, where
  $0 \leq t' < 2n + m + k+1$, such that $s_{k,0}(t') = 1$.  Without loss of
  generality, choose $t'$ to be the smallest such time.  Note that $t' > 1$, as
  $s_{k,0}(0) = 0$ is enforced by the initial conditions, while $s_{k,0}(1) = 0$
  (using \cref{eq:th-ckt-si0-even}) because both
  $s_{k,0}'(0) = z_{k+1,m+k}(0) = 0$ (as enforced by the initial condition).  On
  the other hand, we also have $s_{k, 0}(t' - 1)=s_{k, 0}(t' - 2) = 0$, using
  the choice of $t'$ as the smallest time for which $s_{k, 0}(t') = 1$, and by
  the observation above that $t' > 1$.  Further, as we have already established
  \cref{item:ckt-3} of the lemma for $i = k$ using the induction hypothesis, we
  can apply the second part of that item to deduce that
  $s_{k, 0}'(t' - 1) =s_{k, 0}'(t' - 2)= 0$ (since
  $t' - 2 < t'-1 < 2n + m + k$).  But then, \cref{eq:th-ckt-odd-zim+i-1} implies
  that $z_{k+1,m+k}(t'-1) = 0$, which further implies $s_{k,0}(t')=0$ (from
  \cref{eq:th-ckt-si0-even} as $z_{k+1,m+k}(t'-1) = s_{k, 0}'(t' - 1)=0$ ).  The
  latter is a contradiction to the assumption that $s_{k, 0}(t') = 1$.  We
  therefore must have $s_{k, 0}(t') = 0$ for all $t' < 2n + m + k+1$.  This
  establishes the second part of \cref{item:ckt-4}.  We now proceed to prove the
  first part of \cref{item:ckt-4}.  We begin by recording a few observations.

  Define $\ell_k \defeq \floor{(t-2n)/2^k} \geq 0$, and
  $t_k \defeq 2n + \floor{(t - 2n)/2^{k}}\cdot 2^{k} = 2n + \ell_k\cdot 2^k$.
  Thus, if $t'$ is as in the statement of \cref{item:ckt-4}, we have
  \begin{equation}
    \label{eq:item-4-help-1}
    t_k = \ell_k\cdot 2^k +2n  \leq t' \leq t - 1 \leq  (\ell_k+1)\cdot 2^k + 2n - 2.
  \end{equation}
  From \cref{obv:th-pi-qi-update}, we see that for any $t'' \geq 2n$, we have
  $\overline{p_{k+1}(t''  + m + k)}\land \overline{q_{k+1}(t'' + m + k)} =1$ whenever $t''$ satisfies
  \begin{equation*}
      l'\cdot 2^k +2n \leq t'' \leq (l'+1)\cdot 2^k+2n-2 \text{ for some integer $l'\geq 0$.}
  \end{equation*}
  In view of \cref{eq:item-4-help-1}, this implies that
  \begin{equation}
    \label{eq:item-4-help-3}
    \overline{p_{k+1}(t''  + m + k)}\land \overline{q_{k+1}(t'' + m + k)} =1
    \;\text{ if } \; t_k \leq t'' < t.
  \end{equation}

  We now prove the forward direction of the equivalence claimed in the first
  part of \cref{item:ckt-4} of the lemma.  Suppose therefore that there exists a
  $t'$ satisfying $2n + \floor{(t-2n)/2^k}\cdot 2^k = t_k \leq t' < t$, for some
  $t\geq 2n+1$, such that $x_{k}(t') = 0$ and the formula
  \[
    \forall x_{k-2} \exists x_{k-3} \dots \forall x_2 \exists x_1 \;
    \phi(x_1, x_2, \dots, x_{k-2}, a_{k-1}, a_k, a_{k+1}, \dots, a_n)
  \]
  is true when $a_j = x_j(t')$ for all $j \geq k - 1$.  Since we have already
  established \cref{item:ckt-3} of the lemma for $i=k$ (assuming the induction
  hypothesis), our assumption implies that $s_{k,0}'(t' + m + k ) = 1$.  The
  update expressions in \cref{eq:th-ckt-odd-zim+i-1,eq:th-ckt-si0-even}, along
  with the observation in \cref{eq:item-4-help-3}, then imply that
  \begin{equation}
    \label{eq:item-4-help-4}
    z_{k+1,m+k}(t'+m+k+1)=s_{k,0}(t'+m+k+1)=1.
  \end{equation}
  Since $t'$ satisfies \cref{eq:item-4-help-1}, we see therefore that the time
  $t' + m + k +1$ satisfies the hypothesis of
  \cref{obv:th-si0-and-zi+1m+i-recurrence}, with the integer $l$ in the claim
  set to $\ell_k$.  From \cref{obv:th-si0-and-zi+1m+i-recurrence}, we then get
  that $s_{k,0}(t'') =1$ for all $t''$ satisfying
  \begin{equation*}
    t'+m+k+1 \leq t'' \leq (\ell_k+1)\cdot 2^k +2n+m+k-1.
  \end{equation*}
  In particular, this implies that $s_{k,0}(t+m+k)=1$, since, as observed in
  \cref{eq:item-4-help-1},
  $t + m + k \leq (\ell_k + 1)\cdot2^k + 2 n + m + k - 1$.  This proves the
  forward direction of the equivalence claimed in the first part of
  \cref{item:ckt-4}.

  We now consider the other direction of the equivalence.  Suppose therefore
  that $s_{k, 0}(t + m + k) = 1$ for some $t \geq 2n+1$.  We first note that
  this implies $t \ne t_k$.  This is because when
  $t = t_k = 2n + \ell_k\cdot 2^k$, we must have $\ell_k \geq 1$ (since
  $t \geq 2n + 1$).  But then, \cref{obv:th-both-si0-and-zi+1m+i-is-0} would
  imply that $s_{k,0}(t+m+k)= s_{k,0}(\ell_k\cdot 2^k + 2n + m + k) = 0$.  Thus,
  we must have $t \neq t_k$.

  Now, let $t_1$ be the smallest integer satisfying both
  $s_{k, 0}(t_1 + m + k) = 1$ and $t_k < t_1 \leq t$ (such a $t_1$ exists as $t$
  satisfies both these conditions).  We claim that for such a $t_1$, we must
  have $s_{k,0}'(t_1+m+k-1)=1$.  We prove this by dividing the argument into two
  cases.

  \textbf{Case (a) ($t_1=t_k+1$):} In this case $s_{k,0}(t_k+m+k+1)=1$.  Note that
  if we establish that $z_{k+1,m+k}(t_k+m+k)=0$ it will therefore follow from
  \cref{eq:th-ckt-si0-even} that
  $s_{k,0}'(t_k + m + k) = s_{k,0}'(t_1 + m + k-1) = 1$.

  Now, we note that when $\ell_k \geq 1$, the claim
  $z_{k+1,m+k}(t_k+m+k) = z_{k+1, m+k}(\ell_k\cdot 2^k + 2n + m + k) =0 $ is
  directly implied by \cref{obv:th-both-si0-and-zi+1m+i-is-0}.  For the
  remaining case when $\ell_k = 0$ so that $t_k = 2n$, we observe that we have
  $s_{k,0}'(2n + m + k - 1) = 0$ (from the already established second part of
  \cref{item:ckt-3} of the induction hypothesis for the case $i = k$) and also
  $s_{k,0}(2n + m + k - 1) = 0$ (from the already established second part of
  \cref{item:ckt-4} of the induction hypothesis for the case $i = k$).  Together
  with \cref{eq:th-ckt-odd-zim+i-1}, these imply again that
  $z_{k+1,m+k}(2n+m+k) = 0$.  Thus, we always have $z_{k+1,m+k}(t_k+m+k) = 0$,
  and as observed above, this implies that if $s_{k,0}(t_k+m+k+1)=1$ then
  $s_{k,0}'(t_k + m + k) = s_{k,0}'(t_1 + m + k-1) = 1$.

  \textbf{Case (b) ($t_k + 2 \leq t_1 \leq t$):} In this case
  $s_{k,0}(t_1+m+k)=1$, and $s_{k,0}(t_1+m+k-1)=s_{k,0}(t_1+m+k-2)=0$, by the
  choice of $t_1$. Since $s_{k,0}(t_1+m+k)=1$, \cref{eq:th-ckt-si0-even} implies
  that either $s_{k,0}'(t_1+m+k-1)=1$ or $z_{k+1,m+k}(t_1+m+k-1)=1$. We want to
  show that $s_{k,0}'(t_1+m+k-1)=1$. Suppose, for the sake of contradiction,
  that $s_{k,0}'(t_1+m+k-1)=0$, so that we must have
  $z_{k+1,m+k}(t_1+m+k-1)=1$. \Cref{eq:th-ckt-odd-zim+i-1} then implies that
  $s_{k,0}'(t_1+m+k-2)=1$ as we have $s_{k,0}(t_1+m+k-2)=0$. Since
  $t_k \leq t_1 - 2 < t$, \cref{eq:item-4-help-3} implies that
  $\overline{p_{k+1}(t_1+m+k-2)} \land \overline{q_{k+1}(t_1+m+k-2)} = 1$.  In
  conjunction with \cref{eq:th-ckt-si0-even}, $s_{k,0}'(t_1+m+k-2)=1$ would then
  imply that $s_{k,0}(t_1+m+k-1)=1$.  This, however is a contradiction, since
  $s_{k,0}(t_k+m+k-1)=0$ by the choice of $t_1$. Thus we must have
  $s_{k,0}'(t_1+m+k-1)=1$.

  In all the cases, we therefore have $s_{k,0}'(t_1+m+k-1)=1$. Let $t' =
  t_1-1$. Then, $s_{k,0}'(t'+m+k)=1$, and we also have
  $2n \leq t_k \leq t' < t$. By \cref{item:ckt-3} of the lemma for $i=k$ (which
  we have already established above using the induction hypothesis), this
  implies that $x_k(t')=0$ and
  \[ \forall x_{k-2} \exists x_{k-3} \dots \forall x_2 \exists x_1 \; \phi(x_1,
    x_2, \dots, x_{k-2}, a_{k-1}, a_k, a_{k+1}, \dots, a_n)
  \]
  is true when $a_j = x_j(t')$ for all $j \geq k - 1$. Together, these imply the
  second direction of the equivalence claimed in first part of \cref{item:ckt-4}
  of the lemma, for the case $i = k$.  This completes the induction.
\end{proof}

\cref{lem:th-induction-circuit} immediately implies the following.
 \begin{lemma}
   \label{lem:th-QBF-reduces-to-NCS}
   The output neuron $O$ stimulates at some time $t \geq 0$ if and only if the
   quantified Boolean formula
   \begin{displaymath}
     \exists x_n \forall x_{n-1}\exists x_{n-2} \dots \forall x_2 \exists x_1 \phi(x_1, x_2,
     \dots, x_{n-1}, x_n)
   \end{displaymath}
   is true.
 \end{lemma}
 \begin{proof}
   Note that the output neuron $O$ stimulates at some time $t$ if and only
   $s_{n,0}$ stimulates at time $t-1$ (see \cref{eq:ckt-O}).  Recall that $n$ is
   odd.  Thus, from \cref{item:ckt-1} of \cref{lem:th-induction-circuit}
   (applied with $i = n$), $s_{n,0}$ stimulates at some time $t - 1$ if and only
   if there exists $a_n \in \{0, 1\}$ such that the quantified Boolean formula
   \begin{displaymath}
     \forall x_{n-1}\exists x_{n-2} \dots \forall x_2 \exists x_1 \phi(x_1, x_2,
     \dots, x_{n-1}, a_n)
   \end{displaymath}
   is true.  But this is equivalent to the statement that the quantified Boolean
   formula
   \[
     \exists x_{n}\forall x_{n-1}\exists x_{n-2} \dots \forall x_2 \exists x_1
     \phi(x_1, x_2, \dots, x_{n-1}, a_n)
   \]
   is true.  This completes the proof.
 \end{proof}
 Finally, we note that the above lemma proves
 \cref{thm:th-neural-ckt-simulation-is-PSPACE-hard}.
\end{proof}
We have shown that \nameref{ques:neural-circuit-simulation} is
PSPACE-hard. Finally we observe that \nameref{ques:neural-circuit-simulation}
problem is also in PSPACE.
\begin{proposition}
  \label{thm:NCS-is-in-PSPACE}
  \nameref{ques:neural-circuit-simulation} is in PSPACE.
\end{proposition}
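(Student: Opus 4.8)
The plan is to give a deterministic step-by-step simulation that runs in polynomial space (but possibly exponential time, which is allowed for membership in PSPACE). Write $N \defeq \abs{V}$. The first observation is that the dynamics of a \nckt{} are \emph{deterministic}: by \cref{eq:1}, the state $\sigma_{t+1}$ at time $t+1$ is a fixed function of the state $\sigma_t$ at time $t$. Hence the sequence of states $\sigma_0, \sigma_1, \sigma_2, \dots$ is eventually periodic. Since there are only $2^N$ possible states, among $\sigma_0, \sigma_1, \dots, \sigma_{2^N}$ two must coincide, say $\sigma_i = \sigma_j$ with $0 \leq i < j \leq 2^N$; determinism then gives $\sigma_t = \sigma_{t + (j-i)}$ for all $t \geq i$. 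Consequently every state that ever occurs already occurs among $\sigma_0, \sigma_1, \dots, \sigma_{2^N - 1}$. In particular, $O$ is stimulated at some time $t \geq 0$ if and only if $O$ is stimulated at some time $t \in \inb{0, 1, \dots, 2^N - 1}$.

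The algorithm then is as follows. It maintains the current state $\sigma$ (initialized to $\sigma_0$, in which only $I$ is stimulated) together with a step counter $t$ (initialized to $0$). At each iteration it checks whether $\sigma(O) = 1$; if so, it outputs YES and halts. Otherwise it computes the successor state by evaluating, for each neuron $v$ with $\parent{v} = \inb{u_1, \dots, u_k}$, the threshold function $f_v$ on $(\sigma(u_1), \dots, \sigma(u_k))$ --- that is, forming the integer $\sum_{i=1}^k w_i \sigma(u_i)$ and comparing it with the threshold potential $b$ --- and overwrites $\sigma$ with the resulting state; it then increments $t$. If $t$ reaches $2^N$, it outputs NO and halts. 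By the previous paragraph, this decides \nSimulate{} correctly.

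It remains to bound the space. The state $\sigma$ occupies $N$ bits, and the counter $t$, ranging up to $2^N$, occupies $N + 1$ bits. Evaluating a single threshold function requires forming a sum of at most $N$ integers, each of absolute value at most the constant $W$ fixed in advance, so all partial sums have magnitude at most $WN$ and fit in $O(\log N)$ bits; the comparison with $b$ is then immediate. Computing one successor state thus uses $\poly(N)$ additional workspace, which can be reused across neurons and across time steps. Hence the whole computation runs in space $\poly(N)$, and \nSimulate{} is in PSPACE.

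There is no genuine obstacle here; the only points that require a little care are that the running time is $\Theta(2^N)$ in the worst case --- which is harmless, as PSPACE imposes no time bound --- and that the counter $t$, although it ranges up to the exponential value $2^N$, still fits in $O(N)$ bits.
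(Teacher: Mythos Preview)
Your proof is correct and follows essentially the same approach as the paper's own proof: simulate the deterministic dynamics for at most $2^{\abs{V}}$ steps using a polynomial-space state and counter, and appeal to the pigeonhole principle to conclude that if $O$ has not stimulated by then it never will. Your version is somewhat more detailed in justifying the space bound for evaluating the threshold functions and storing the counter, but the argument is the same.
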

\begin{proof}
  Suppose that we are given a neural circuit $G = (V, E)$ with input neuron $I$
  and output neuron $O$.  Starting with the initial condition in which only $I$
  is stimulated, we simulate $G$ for $2^{|V|} + 1$ units of time.  If $O$ does
  not stimulate before this time we return NO, otherwise we return YES.

  The total number of possible states of the neural circuit $G$ is $2^{|V|}$ (as
  every neuron has only two possible stimulation states). Further, given a
  stimulation state of the neural circuit at a given time, the stimulation state
  of the circuit at the next time step is completely specified by the update
  functions.  Thus, if $O$ does not become stimulated by time $2^{|V|} + 1$, the
  simulation must have entered a loop, thus ensuring that $O$ will never enter a
  stimulated state even if the simulation were continued indefinitely.  This
  shows that the above algorithm returns YES if and only if $G$ is non-trivial.

  Finally, we observe that the space needed to run the above algorithm is is
  $O(\poly{|V|})$ (corresponding to the space needed to store the states of the
  neurons and the time counter).
\end{proof}

\section{Corollaries of the hardness of \nameref{ques:neural-circuit-simulation}}
We now prove the hardness of problems 1-4 in Ramaswamy's list, quoted in the
introduction.  We begin with the following simple observation which follows
easily from \Cref{thm:th-neural-ckt-simulation-is-PSPACE-hard}.
\begin{proposition}
  For $k = 2$, \label{thm:2-size-degenerate-ckt-is-PSPACE-hard}
\nameref{ques:k-size-degenerate-circuit-decision}  is PSPACE-hard.
\end{proposition}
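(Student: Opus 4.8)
The plan is to reduce from the complement of the \nSimulate{} problem, using \Cref{thm:th-neural-ckt-simulation-is-PSPACE-hard} together with the closure of PSPACE under complementation. The observation that drives the case $k = 2$ is that, by \Cref{def:degenerate-circuit}, every degenerate circuit of a \nckt{} contains both $I$ and $O$; since $I \neq O$, a degenerate circuit of size exactly $2$ exists if and only if $\{I, O\}$ is itself a degenerate circuit, i.e., if and only if the circuit obtained by silencing all neurons outside $\{I, O\}$ is non-trivial exactly when the original circuit is. This biconditional is easy to control once one of its two sides is pinned down, so I would engineer the reduced instance so that the silenced circuit is \emph{always} trivial.

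Concretely, given a \nSimulate{} instance $G = (V, E)$ with input $I$ and output $O$, I would form $G'$ by adjoining a single fresh neuron $O'$ whose only incoming edge is from $O$ and whose update rule is the copy $O' \leftarrow O = [O \geq 1]$, and then designate $O'$ as the output neuron of $G'$, keeping $I$ as its input. This is a polynomial-time transformation, it keeps every update function a threshold function with weights of absolute value at most $2$, and (since $O$ has no outgoing edges in a \nSimulate{} instance) it raises the degree of only $O$, and only by $1$. Because $O'$ has no outgoing edges, the neurons inherited from $G$ behave in $G'$ exactly as in $G$, while $O'(0) = 0$ and $O'(t+1) = O(t)$ for all $t \geq 0$; hence $G'$ is non-trivial if and only if $G$ is non-trivial.

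Finally I would analyze the silenced circuit: silencing all neurons of $G'$ outside $\{I, O'\}$ in particular deactivates $O$, so $O(t) = 0$, and therefore $O'(t) = 0$, for all $t$; the silenced circuit is thus always trivial. Plugging this into the biconditional above, $\{I, O'\}$ is a degenerate circuit of $G'$ if and only if $G'$ is trivial, equivalently if and only if $G$ is trivial. As $\{I, O'\}$ is the only possible size-$2$ degenerate circuit of $G'$, \textsc{$2$-Degenerate-Circuit-Decision} answers YES on $G'$ precisely when $G$ is a NO-instance of \nSimulate{}; combined with \Cref{thm:th-neural-ckt-simulation-is-PSPACE-hard} and $\mathrm{PSPACE} = \mathrm{coPSPACE}$, this yields the claimed hardness. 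I do not foresee a genuine obstacle here: the only points needing care are getting the direction of the biconditional right and noting that silencing $V(G') \setminus \{I, O'\}$ kills $O$ as well, so that $O'$ provably never fires in the silenced circuit. One could alternatively avoid the $O'$-gadget by invoking the explicit form $O \leftarrow s_{n,0}$ of the output neuron in the reduction behind \Cref{thm:th-neural-ckt-simulation-is-PSPACE-hard} (where $s_{n,0} \notin \{I, O\}$), but the gadget version keeps the argument self-contained.
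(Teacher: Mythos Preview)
Your proposal is correct, and its core idea coincides with the paper's: a size-$2$ degenerate circuit can only be $\{I,O\}$, and this set is degenerate precisely when the full circuit is trivial, provided one has arranged that silencing everything outside $\{I,O\}$ guarantees the output never fires.

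The difference is purely in how that last proviso is secured. The paper reduces directly from TQBF and exploits the specific structure of the circuit $C_\phi$ built in \Cref{thm:th-neural-ckt-simulation-is-PSPACE-hard}: there the output satisfies $O \leftarrow s_{n,0}$ with $s_{n,0}\notin\{I,O\}$, so silencing $s_{n,0}$ already forces $O$ to stay off. You instead reduce from (the complement of) \nSimulate{} in a black-box fashion, appending a fresh copy-neuron $O'$ so that the needed property holds for an arbitrary instance, and then invoke $\mathrm{PSPACE}=\mathrm{coPSPACE}$. Your route is slightly more modular and self-contained (it does not reopen the internals of $C_\phi$), while the paper's route is marginally shorter since the requisite structural fact is already available. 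You even note the paper's alternative at the end, so you have identified both variants.
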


\begin{proof}
  We reduce the TQBF problem to
  \nameref{ques:k-size-degenerate-circuit-decision} for $k = 2$. Given a TQBF
  instance
  $\exists x_n \forall x_{n-1}\dots \exists x_1\phi(x_1, x_2, \dots, x_n)$ (as
  in \cref{def:tqbf}), we construct the same neural circuit $C_\phi$ as in the
  proof of \Cref{thm:th-neural-ckt-simulation-is-PSPACE-hard}
  (\cref{fig:th-ckt-construction}). From \Cref{lem:th-QBF-reduces-to-NCS},
  $\exists x_n \forall x_{n-1}\dots \exists x_1\phi$ is true if and only if
  $C_\phi$ is non-trivial. We now show that $C_\phi$ is non-trivial if and only
  if there is no degenerate circuit of $C_\phi$ of size $2$.
  
  From \cref{eq:ckt-O}, $O$ cannot stimulate if we silence the neuron
  $s_{n,0}$. This implies that if $C_\phi$ is non-trivial, then any degenerate
  circuit of $C_\phi$ must contain the neuron $s_{n,0}$.  Recall from
  \cref{def:degenerate-circuit} that the input neuron $I$ and the output neuron
  $O$ are both always contained in any degenerate circuit.  Thus, we see that if
  $C_\phi$ is non-trivial then any degenerate circuit of $C_\phi$ must contain
  the neurons $I$, $O$ and $s_{n,0}$, and therefore must have size at least $3$.
  On the other hand, in the case where $C_\phi$ is not non-trivial, the set
  $\inb{I, O}$ is a degenerate circuit of $C_\phi$ of size $2$ (this is because
  in the remaining circuit, the neuron $s_{n,0}$ is silenced, and hence, as
  observed above, $O$ can never stimulate).

  Thus, we see that $C_\phi$ has a degenerate circuit of size $2$ if and only if
  $C_\phi$ is not non-trivial. By construction, $C_\phi$ is non-trivial if and
  only if $\exists x_n \forall x_{n-1}\dots \exists x_1\phi$ is
  true. Altogether, this gives a polynomial-time reduction from the TQBF problem
  to \nameref{ques:k-size-degenerate-circuit-decision} for $k = 2$. This proves
  for $k = 2$, \nameref{ques:k-size-degenerate-circuit-decision} is PSPACE-hard.
\end{proof}

\begin{proposition}
\label{lem:minimal-degenerate-ckt-is-PSPACE-hard}
  \nameref{ques:minimal-degenerate-circuit-decision} is PSPACE-hard.
\end{proposition}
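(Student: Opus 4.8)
The plan is to reduce TQBF to \nameref{ques:minimal-degenerate-circuit-decision} using the very same circuit $C_\phi$ built in the proof of \Cref{thm:th-neural-ckt-simulation-is-PSPACE-hard} (see \cref{fig:th-ckt-construction}). By \Cref{lem:th-QBF-reduces-to-NCS}, $C_\phi$ is non-trivial if and only if the input formula $\exists x_n\forall x_{n-1}\dots\exists x_1\,\phi$ is true, and $C_\phi$ is constructible from $\phi$ in polynomial time; so it suffices to show that $C_\phi$ has a minimal degenerate circuit of size at least $3$ exactly when $C_\phi$ is non-trivial.

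First I would handle the case where $C_\phi$ is non-trivial. Exactly as in the proof of \Cref{thm:2-size-degenerate-ckt-is-PSPACE-hard}, the update rule $O\leftarrow s_{n,0}$ (\cref{eq:ckt-O}) forces $s_{n,0}$ to belong to \emph{every} degenerate circuit: silencing $s_{n,0}$ makes $O$ permanently non-stimulated, so the silenced circuit would be trivial even though $C_\phi$ is not, contradicting the defining property of a degenerate circuit. Combined with the convention (\cref{def:degenerate-circuit}) that $I$ and $O$ lie in every degenerate circuit, this shows every degenerate circuit of $C_\phi$ contains $\{I,O,s_{n,0}\}$ and hence has size at least $3$. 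Since the full vertex set of $C_\phi$ is a degenerate circuit of $C_\phi$ and the ground set is finite, a minimal degenerate circuit exists; by the preceding sentence it has size at least $3$, so the answer is YES.

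Next I would handle the case where $C_\phi$ is trivial. Here I would observe that $\{I,O\}$ is itself a degenerate circuit of $C_\phi$: in the circuit obtained by silencing $V\setminus\{I,O\}$, the neuron $s_{n,0}$ is silenced, so by \cref{eq:ckt-O} the neuron $O$ never stimulates, which makes the silenced circuit trivial — matching the fact that $C_\phi$ is trivial. Because every degenerate circuit must contain both $I$ and $O$, the set $\{I,O\}$ is a proper subset of every strictly larger degenerate circuit, so no strictly larger degenerate circuit can be minimal. Hence $\{I,O\}$ is the unique minimal degenerate circuit of $C_\phi$, and it has size $2<3$, so the answer is NO. Putting the two cases together with \Cref{lem:th-QBF-reduces-to-NCS} yields a polynomial-time reduction from TQBF, establishing PSPACE-hardness.

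I do not anticipate a real obstacle, since the argument is a refinement of the one used for \Cref{thm:2-size-degenerate-ckt-is-PSPACE-hard}. The only point needing care is the backward direction: one must argue not merely that $\{I,O\}$ is \emph{a} degenerate circuit but that it is the \emph{unique minimal} one, which is precisely where the convention that $I$ and $O$ belong to every degenerate circuit is used; if one were to drop that convention, a separate argument ruling out small minimal degenerate circuits in the trivial case would be required.
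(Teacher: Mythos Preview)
Your proposal is correct and follows essentially the same approach as the paper: reduce TQBF via $C_\phi$, invoke the argument of \Cref{thm:2-size-degenerate-ckt-is-PSPACE-hard} to see that every degenerate circuit has size at least $3$ in the non-trivial case, and observe that $\{I,O\}$ is a degenerate circuit of size $2$ in the trivial case. You are simply more explicit than the paper about why $\{I,O\}$ is the \emph{unique} minimal degenerate circuit when $C_\phi$ is trivial (so that no minimal degenerate circuit of size $\geq 3$ exists), a point the paper leaves implicit.
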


\begin{proof}
  We reduce the TQBF problem to
  \nameref{ques:minimal-degenerate-circuit-decision}. Given a TQBF instance
  $\exists x_n \forall x_{n-1}\dots \exists x_1\phi(x_1, x_2, \dots, x_n)$ (as
  defined in \cref{def:tqbf}), we construct the neural circuit $C_\phi$ (as in
  \cref{fig:th-ckt-construction}). We show that
  $\exists x_n \forall x_{n-1}\dots \exists x_1\phi$ is true if and only if
  every minimal degenerate circuit of $C_\phi$ of size at least $3$.
  
  As we have seen in the proof of
  \cref{thm:2-size-degenerate-ckt-is-PSPACE-hard},
  $\exists x_n \forall x_{n-1}\dots \exists x_1$ is true if and only if $C_\phi$
  does not have a degenerate circuit of size $2$. On the other hand, $C_\phi$
  always has a degenerate circuit of size greater than $2$ (namely, $C_\phi$
  itself).  Also, every degenerate circuit is of size at least $2$, as it must
  contain the neurons $I$ and $O$ (cf.~\cref{def:degenerate-circuit}). Thus, we
  see that $\exists x_n \forall x_{n-1}\dots \exists x_1\phi$ is true if and
  only if any minimal degenerate circuit of $C_\phi$ is of size at least $3$.
\end{proof}

\begin{proposition}
  \label{lem:minimum-degenerate-ckt-is-PSPACE-hard}
  \nameref{ques:minimum-degenerate-circuit-decision} is PSPACE-hard.
\end{proposition}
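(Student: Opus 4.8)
The plan is to reuse the circuit $C_\phi$ built in the proof of \Cref{thm:th-neural-ckt-simulation-is-PSPACE-hard}, exactly as in the two preceding propositions, and to show that the size of a minimum degenerate circuit of $C_\phi$ is at least $3$ if and only if the given TQBF instance $\exists x_n \forall x_{n-1}\dots \exists x_1\phi$ is true. This immediately yields a polynomial-time (indeed logspace) reduction from TQBF to \nameref{ques:minimum-degenerate-circuit-decision}, giving PSPACE-hardness.

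First I would recall two facts already in hand. By \Cref{lem:th-QBF-reduces-to-NCS}, $C_\phi$ is non-trivial if and only if the TQBF instance is true. And, as established in the proof of \Cref{thm:2-size-degenerate-ckt-is-PSPACE-hard}: since silencing $s_{n,0}$ forces $O$ never to stimulate (by \cref{eq:ckt-O}), when $C_\phi$ is non-trivial every degenerate circuit of $C_\phi$ must contain $s_{n,0}$; together with the neurons $I$ and $O$, which lie in every degenerate circuit by \Cref{def:degenerate-circuit}, this makes every degenerate circuit of $C_\phi$ have size at least $3$, so the minimum degenerate circuit has size at least $3$.

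For the converse direction I would argue that if the TQBF instance is false then $C_\phi$ is trivial, and then $\inb{I,O}$ is itself a degenerate circuit of $C_\phi$: silencing all neurons outside $\inb{I,O}$ silences $s_{n,0}$, so in the silenced circuit $O$ never stimulates, i.e. it is trivial, matching the triviality of $C_\phi$. Since every degenerate circuit contains both $I$ and $O$, the set $\inb{I,O}$ is a minimum degenerate circuit, of size exactly $2$. Combining the two cases, the minimum degenerate circuit of $C_\phi$ has size at least $3$ exactly when $C_\phi$ is non-trivial, i.e. exactly when the TQBF instance is true.

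There is no real obstacle here beyond what the earlier reductions already handle; the one point to check is that the decision threshold ($\geq 3$) sits precisely in the gap between the two cases (size $\geq 3$ when non-trivial, size $= 2$ when trivial), which it does because $I$, $O$ and $s_{n,0}$ are all forced into any degenerate circuit in the non-trivial case, while $\inb{I,O}$ already suffices in the trivial case.
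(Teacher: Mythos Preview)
Your proposal is correct and follows essentially the same approach as the paper: both reuse the circuit $C_\phi$ and argue, via the observations already made in the proof of \Cref{thm:2-size-degenerate-ckt-is-PSPACE-hard}, that the minimum degenerate circuit of $C_\phi$ has size at least $3$ precisely when $C_\phi$ is non-trivial, i.e.\ when the TQBF instance is true. The paper's own proof simply points back to \Cref{lem:minimal-degenerate-ckt-is-PSPACE-hard} for the same reasoning you spell out explicitly.
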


\begin{proof} This proof is identical to the proof of
  \cref{lem:minimal-degenerate-ckt-is-PSPACE-hard}.  Again, we reduce the TQBF
  problem to \nameref{ques:minimum-degenerate-circuit-decision}. Given a TQBF
  instance $\exists x_n \forall x_{n-1}\dots \exists x_1\phi$ (as defined in
  \cref{def:tqbf}), we construct the neural circuit $C_\phi$ (as in
  \cref{fig:th-ckt-construction}), and show that the instance is true if and
  only if a minimum degenerate circuit of $C_\phi$ of size $2$ does not exist.

  Arguing exactly as in the proof of
  \cref{lem:minimal-degenerate-ckt-is-PSPACE-hard}, we see that
  $\exists x_n \forall x_{n-1}\dots \exists x_1\phi$ is true if and only if any
  degenerate circuit (in particular any minimum degenerate circuit) of $C_\phi$
  is of size at least $3$.  This gives a polynomial time reduction from the TQBF
  problem to \nameref{ques:minimum-degenerate-circuit-decision}, and hence shows
  that \nameref{ques:minimum-degenerate-circuit-decision} is PSPACE-hard.
\end{proof}

We also record here a hardness result for an approximation version of the
problem.  For $\alpha > 1$, we say that a degenerate circuit $C$ of a given
neural circuit $G$ is an \emph{$\alpha$-approximate minimum degenerate circuit}
if the size of $C$ is at most $\alpha s$, where $s$ is the size of a minimum
degenerate circuit of $G$.

\begin{proposition}
\label{cor:finding-log-approx-degenerate-ckt-is-PSPACE-hard}
Fix any integer $c \geq 1$. It is PSPACE-hard to find a
$(c\cdot \ceil{\log N})$-approximate minimum degenerate circuit of an input
neural circuit of size $N$.
\end{proposition}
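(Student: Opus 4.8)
The plan is to take the circuit $C_\phi$ built in the proof of \cref{thm:th-neural-ckt-simulation-is-PSPACE-hard} and amplify the gap that is already implicit there between the size of a minimum degenerate circuit when $C_\phi$ is non-trivial (at least $3$, since any degenerate circuit must contain $I$, $O$ and $s_{n,0}$) and when it is trivial (exactly $2$, witnessed by $\inb{I,O}$). Concretely I will blow the ``$3$'' up to ``$L+3$'' for a polynomially bounded $L$, so that any $(c\cdot\ceil{\log N})$-approximate minimum degenerate circuit is forced to be large in the non-trivial case but must stay small in the trivial case; reading off the size of the returned set then decides TQBF.

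Given a TQBF instance $\exists x_n\forall x_{n-1}\cdots\exists x_1\,\phi$, I first construct $C_\phi$ as in \cref{fig:th-ckt-construction}, then modify it into $C_\phi'$: delete the edge $s_{n,0}\to O$ of \cref{eq:ckt-O}, introduce $L$ fresh ``copy'' neurons $v_1,\dots,v_L$, and add the update rules $v_1\leftarrow s_{n,0}$, $v_j\leftarrow v_{j-1}$ for $2\le j\le L$, and $O\leftarrow v_L$. Each $v_j$ is a threshold neuron with a single weight $1$ and threshold $1$, of degree $2$, and the degrees of $s_{n,0}$ and of $O$ are unchanged, so $C_\phi'$ is still a legal bounded-degree threshold circuit (the statement does not require this, but it comes for free). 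Since the new path merely delays $O$ by $L$ steps and feeds nothing back into the rest of $C_\phi$, all the internal properties of $C_\phi$ established earlier still hold verbatim, and $C_\phi'$ is non-trivial if and only if $C_\phi$ is, which by \cref{lem:th-QBF-reduces-to-NCS} holds if and only if the QBF is true. I take $L$ to be the least positive integer such that $L+3 > 2c\cdot\ceil{\log(\abs{C_\phi}+L)}$ (such an integer exists because $L+3$ grows linearly in $L$ while $\ceil{\log(\abs{C_\phi}+L)}$ grows only logarithmically), and set $N=\abs{C_\phi'}=\abs{C_\phi}+L$; since $\abs{C_\phi}=\poly{n}$ and $c$ is a fixed constant, this $L$ is polynomially bounded in $n$, so $C_\phi'$ is constructible in time $\poly{n}$.

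Next I pin down the two cases. If the QBF is false, then $C_\phi'$ is trivial, and silencing every neuron outside $\inb{I,O}$ again yields a trivial circuit (now $v_L$, hence $O$, never stimulates), so $\inb{I,O}$ is a degenerate circuit of size $2$, which is minimum. If the QBF is true, then $C_\phi'$ is non-trivial, and I claim every degenerate circuit $N'$ of $C_\phi'$ contains all of $I,O,s_{n,0},v_1,\dots,v_L$ --- these are $L+3$ distinct neurons: $I$ and $O$ lie in every degenerate circuit by \cref{def:degenerate-circuit}, and if any one of $s_{n,0},v_1,\dots,v_L$ were silenced then, by the chain of copy rules, $v_L$ and hence $O$ would never stimulate, making the silenced circuit trivial while $C_\phi'$ is non-trivial, contradicting \cref{def:degenerate-circuit}. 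Hence in this case the minimum degenerate circuit has size at least $L+3$.

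Finally, suppose $A$ is an algorithm returning a $(c\cdot\ceil{\log N})$-approximate minimum degenerate circuit. On input $C_\phi'$ it returns some degenerate circuit $D$. If the QBF is false, then $\abs{D}\le 2c\ceil{\log N}$ (the approximation guarantee applied to a minimum of size $2$); if it is true, then $\abs{D}\ge L+3 > 2c\ceil{\log N}$ by the choice of $L$. Thus the QBF is true if and only if $\abs{D} > 2c\ceil{\log N}$, so a single call to $A$ together with the polynomial-time construction of $C_\phi'$ decides TQBF; since TQBF is PSPACE-complete (\cref{def:tqbf}), finding a $(c\cdot\ceil{\log N})$-approximate minimum degenerate circuit is PSPACE-hard. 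The only step needing genuine care is the quantitative choice of $L$, which must simultaneously make the $(L+3)$-versus-$2$ gap exceed the approximation factor $c\cdot\ceil{\log N}$ and keep $C_\phi'$ of polynomial size; everything else follows immediately from the properties of $C_\phi$ already proved.
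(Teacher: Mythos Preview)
Your proof is correct but follows a genuinely different route from the paper's. The paper does \emph{not} amplify the gap: it works directly with $C_\phi$, calls the approximation oracle once to obtain a degenerate circuit $B$, and then branches on $\abs{B}$. If $\abs{B}>2c\ceil{\log N}$ it answers YES (since then the minimum degenerate circuit has size at least $3$, forcing $C_\phi$ to be non-trivial). If $\abs{B}\le 2c\ceil{\log N}$, the paper exploits the key extra observation that a degenerate circuit this small can be \emph{simulated} in time $2^{\abs{B}}+1\le N^{2c}+1=\poly{n}$, and simply runs $B$ to see whether $O$ ever fires; the answer to TQBF is read off from that simulation. Your approach instead pads $C_\phi$ with a chain of $L$ copy neurons so that in the non-trivial case every degenerate circuit already has size $>2c\ceil{\log N}$, which lets you decide TQBF by a threshold test on $\abs{D}$ alone, with no simulation step. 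Your argument is arguably cleaner and yields essentially a one-call reduction with trivial post-processing; the paper's argument avoids modifying $C_\phi$ but hinges on the additional (easy) insight that logarithmic-size sub-circuits can be brute-force simulated in polynomial time.
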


\begin{proof} We give a polynomial-time Turing reduction from TQBF to the
  $(c\cdot \ceil{\log N})$-approximate minimum degenerate circuit problem.
  Given a TQBF instance
  $\exists x_n \forall x_{n-1}\dots \exists x_1\phi(x_1, x_2, \dots, x_n)$ (as
  defined in \cref{def:tqbf}), we construct the neural circuit $C_\phi$ (as in
  \cref{fig:th-ckt-construction}).  Note that the size of $C_\phi$ is
  $N \leq p(n)$ for some fixed polynomial $p$.

  Now, we begin by finding find a $(c\cdot \ceil{\log N})$-approximate minimum
  degenerate circuit $B$ of $C_\phi$. If the size of $B$ is more than
  $2\cdot c \cdot \ceil{\log N}$, then we answer YES (i.e., that
  $\exists x_n \forall x_{n-1}\dots \exists x_1\phi$ is true).  This is correct
  since in this case, the size of any minimum degenerate circuit of $C_\phi$
  must be at least $3$, so that, as argued in the proof of
  \cref{lem:minimal-degenerate-ckt-is-PSPACE-hard},
  $\exists x_n \forall x_{n-1}\dots \exists x_1\phi$ must be true.

  If the size of $B$ is at most $2\cdot c \cdot \ceil{\log N}$, we simulate $B$
  for time $2^{|B|} + 1 \leq 2^{2\cdot c \cdot \log N} + 1 \leq \poly{n}$
  starting from the initial condition in which only the input neuron is
  stimulated and answer YES if the output neuron stimulates during the
  simulation and NO otherwise.  The correctness is then guaranteed by the facts
  that $B$ is a degenerate circuit of $C_\phi$, and that $C_\phi$ is non-trivial
  if and only if $\exists x_n \forall x_{n-1}\dots \exists x_1\phi$ is true.
  The sufficiency of the time of simulation follows as in the proof of
  \Cref{thm:NCS-is-in-PSPACE}.
\end{proof}

\begin{proposition}
\label{thm:1-Vital-set-is-PSPACE-hard}
\nameref{ques:1-vital-set-decision} is PSPACE-hard.\end{proposition}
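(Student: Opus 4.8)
The plan is to reduce TQBF to \nameref{ques:1-vital-set-decision} using, once more, the circuit $C_\phi$ from the proof of \cref{thm:th-neural-ckt-simulation-is-PSPACE-hard} (\cref{fig:th-ckt-construction}). Given a TQBF instance $\exists x_n \forall x_{n-1}\dots \exists x_1 \phi$ as in \cref{def:tqbf}, I build $C_\phi$ in time $\poly{n}$; by \cref{lem:th-QBF-reduces-to-NCS} it is non-trivial if and only if the instance is true. The whole reduction then rests on one claim: \emph{$C_\phi$ has a non-empty set of $1$-vital sets if and only if $C_\phi$ is non-trivial.}

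First I would handle the forward direction. If $C_\phi$ is non-trivial, then because $O \leftarrow s_{n,0}$ (\cref{eq:ckt-O}), silencing the single neuron $s_{n,0}$ forces $O$ to stay unstimulated forever, so the silenced circuit is trivial. By \cref{def:degenerate-circuit}, since $C_\phi$ is non-trivial, every degenerate circuit of $C_\phi$ must therefore contain $s_{n,0}$. As $s_{n,0} \notin \inb{I,O}$, the singleton $\inb{s_{n,0}}$ meets every degenerate circuit, i.e.\ it is a $1$-vital set, so the set of $1$-vital sets is non-empty. Second, for the converse, if $C_\phi$ is trivial I would check that $\inb{I,O}$ is itself a degenerate circuit: silencing $V\setminus\inb{I,O}$ silences $s_{n,0}$, so (again by \cref{eq:ckt-O}) $O$ never stimulates and the silenced circuit is trivial, matching the triviality of $C_\phi$. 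Then for every $v \in V\setminus\inb{I,O}$ we have $\inb{v}\cap\inb{I,O}=\emptyset$, so $\inb{v}$ is not vital; hence $C_\phi$ has no $1$-vital set.

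Combining the claim with \cref{lem:th-QBF-reduces-to-NCS} gives that $C_\phi$ has a non-empty set of $1$-vital sets precisely when the TQBF instance is true, which is the desired polynomial-time reduction. I do not anticipate a real obstacle: the only point needing care is the verification that silencing $V\setminus\inb{I,O}$ yields a trivial circuit, and this is immediate since $O$'s only parent is $s_{n,0}$. In particular, the PSPACE-hardness of \nameref{ques:1-vital-set-decision} falls out of the already-established properties of $C_\phi$ with no new gadgetry, exactly as for the other problems in Ramaswamy's list.
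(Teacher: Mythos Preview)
Your proposal is correct and follows essentially the same approach as the paper: reduce from TQBF via $C_\phi$, and argue that $\inb{s_{n,0}}$ is a $1$-vital set when $C_\phi$ is non-trivial (since $O$'s only parent is $s_{n,0}$), while $\inb{I,O}$ is a degenerate circuit witnessing that no $1$-vital set exists when $C_\phi$ is trivial. The paper's proof is slightly terser, invoking the already-proved \cref{thm:2-size-degenerate-ckt-is-PSPACE-hard} for the forward direction rather than re-deriving it, but the logical content is the same.
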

\begin{proof} We reduce the TQBF problem to
  \nameref{ques:1-vital-set-decision}. Given a TQBF instance $\phi$ (as defined
  in \cref{def:tqbf}), we construct the neural circuit $C_\phi$ (as in
  \cref{fig:th-ckt-construction}). We now show that
  $\exists x_n \forall x_{n-1}\dots \exists x_1\phi$ is true if and only if the
  set of $1$-vital sets of $C_\phi$ is non-empty.

  The set of $1$-vital sets contains those neurons (except $I$ and $O$) which
  belong to all the degenerate circuits. As we have seen in the proof of
  \cref{thm:2-size-degenerate-ckt-is-PSPACE-hard},
  $\exists x_n \forall x_{n-1}\dots \exists x_1\phi$ is true if and only if all
  the degenerate circuits of $C_\phi$ contain the neurons $I$, $O$, and
  $s_{n,0}$. On the other hand, if
  $\exists x_n \forall x_{n-1}\dots \exists x_1\phi$ is false, then $C_\phi$ is
  not non-trivial so that $\inb{I, O}$ is a degenerate circuit for $C_\phi$.
  Thus, we see that $\exists x_n \forall x_{n-1}\dots \exists x_1\phi$ is true
  if and only if the set of $1$-vital sets of $C_\phi$ is non-empty. This gives
  a polynomial-time reduction from the TQBF problem to
  \nameref{ques:1-vital-set-decision}, and hence shows that
  \nameref{ques:1-vital-set-decision} is PSPACE-hard.
\end{proof}

%


 \paragraph{Acknowledgments.}
 We thank Shubham Pawar for several helpful discussions.  PS acknowledges support
 from a Ramanujan Fellowship of the DST.  VSS and PS acknowledge support from the
 Department of Atomic Energy, Government of India, under project
 no. 12-R\&D-TFR-5.01-0500.

\bibliographystyle{acm}
\bibliography{biblio}

\begin{thebibliography}{10}

\bibitem{bohra_identification_2018}
{\sc Bohra, A.~A., Kallman, B.~R., Reichert, H., and VijayRaghavan, K.}
\newblock Identification of a {Single} {Pair} of {Interneurons} for {Bitter}
  {Taste} {Processing} in the {Drosophila} {Brain}.
\newblock {\em Current Biology 28}, 6 (Mar. 2018), 847--858.e3.

\bibitem{edelman_degeneracy_2001}
{\sc Edelman, G.~M., and Gally, J.~A.}
\newblock Degeneracy and complexity in biological systems.
\newblock {\em Proceedings of the National Academy of Sciences 98}, 24 (Nov.
  2001), 13763--13768.

\bibitem{flood_single_2013}
{\sc Flood, T.~F., Iguchi, S., Gorczyca, M., White, B., Ito, K., and Yoshihara,
  M.}
\newblock A single pair of interneurons commands the {Drosophila} feeding motor
  program.
\newblock {\em Nature 499}, 7456 (July 2013), 83--87.

\bibitem{forli_two-photon_2018}
{\sc Forli, A., Vecchia, D., Binini, N., Succol, F., Bovetti, S., Moretti, C.,
  Nespoli, F., Mahn, M., Baker, C.~A., Bolton, M.~M., Yizhar, O., and Fellin,
  T.}
\newblock Two-{Photon} {Bidirectional} {Control} and {Imaging} of {Neuronal}
  {Excitability} with {High} {Spatial} {Resolution} {In} {Vivo}.
\newblock {\em Cell Reports 22}, 11 (Mar. 2018), 3087--3098.

\bibitem{hitron2019counting}
{\sc Hitron, Y., and Parter, M.}
\newblock Counting to {Ten} with {Two} {Fingers}: {Compressed} {Counting} with
  {Spiking} {Neurons}.
\newblock In {\em 27th {Annual} {European} {Symposium} on {Algorithms} ({ESA}
  2019)\/} (Dagstuhl, Germany, 2019), M.~A. Bender, O.~Svensson, and G.~Herman,
  Eds., vol.~144 of {\em Leibniz {International} {Proceedings} in {Informatics}
  ({LIPIcs})}, pp.~57:1--57:17.
\newblock Full version available at \arxiv{1902.10369}.

\bibitem{hitron2020computational}
{\sc Hitron, Y., Parter, M., and Perri, G.}
\newblock {The Computational Cost of Asynchronous Neural Communication}.
\newblock In {\em 11th Innovations in Theoretical Computer Science Conference
  (ITCS 2020)\/} (Dagstuhl, Germany, 2020), T.~Vidick, Ed., vol.~151 of {\em
  Leibniz International Proceedings in Informatics (LIPIcs)}, pp.~48:1--48:47.

\bibitem{koch_modular_2012}
{\sc Koch, C.}
\newblock Modular {Biological} {Complexity}.
\newblock {\em Science 337}, 6094 (Aug. 2012), 531--532.

\bibitem{koch02:_handb_brain_theor_neural_networ}
{\sc Koch, C., Mo, C.-H., and Softky, W.}
\newblock {\em Single-cell models}, 2~ed.
\newblock MIT Press, 2002, pp.~1044--1049.

\bibitem{kumar_challenges_2013}
{\sc Kumar, A., Vlachos, I., Aertsen, A., and Boucsein, C.}
\newblock Challenges of understanding brain function by selective modulation of
  neuronal subpopulations.
\newblock {\em Trends in Neurosciences 36}, 10 (Oct. 2013), 579--586.

\bibitem{maass1996lower}
{\sc Maass, W.}
\newblock Lower bounds for the computational power of networks of spiking
  neurons.
\newblock {\em Neural computation 8}, 1 (1996), 1--40.

\bibitem{maass1997networks}
{\sc Maass, W.}
\newblock Networks of spiking neurons: the third generation of neural network
  models.
\newblock {\em Neural networks 10}, 9 (1997), 1659--1671.

\bibitem{mardinly_precise_2018}
{\sc Mardinly, A.~R., Oldenburg, I.~A., Pégard, N.~C., Sridharan, S., Lyall,
  E.~H., Chesnov, K., Brohawn, S.~G., Waller, L., and Adesnik, H.}
\newblock Precise multimodal optical control of neural ensemble activity.
\newblock {\em Nature Neuroscience 21}, 6 (June 2018), 881--893.

\bibitem{mel_information_1994}
{\sc Mel, B.~W.}
\newblock Information {Processing} in {Dendritic} {Trees}.
\newblock {\em Neural Computation 6}, 6 (Nov. 1994), 1031--1085.

\bibitem{Ramaswamy2019}
{\sc Ramaswamy, V.}
\newblock An algorithmic barrier to neural circuit understanding, May 2019.
\newblock \biorxiv{639724}.

\bibitem{schmitt_computing_1998}
{\sc Schmitt, M.}
\newblock On computing {Boolean} functions by a spiking neuron.
\newblock {\em Annals of Mathematics and Artificial Intelligence 24}, 1 (Feb.
  1998), 181--191.

\bibitem{tovey1984simplified}
{\sc Tovey, C.~A.}
\newblock A simplified {NP}-complete satisfiability problem.
\newblock {\em Discrete Applied Mathematics 8\/} (1984), 85--89.

\bibitem{vlachos_neural_2013}
{\sc Vlachos, I., Zaytsev, Y.~V., Spreizer, S., Aertsen, A., and Kumar, A.}
\newblock Neural system prediction and identification challenge.
\newblock {\em Frontiers in Neuroinformatics 7\/} (2013).

\bibitem{zhang_closed-loop_2018}
{\sc Zhang, Z., Russell, L.~E., Packer, A.~M., Gauld, O.~M., and Häusser, M.}
\newblock Closed-loop all-optical interrogation of neural circuits in vivo.
\newblock {\em Nature Methods 15}, 12 (Dec. 2018), 1037--1040.

\end{thebibliography}

\end{document}